\newcommand{\todo}[1]{{\color{red} TODO: #1}}
\newcommand{\nosection}[1]{\vspace{4pt}\noindent\textbf{#1. }}
\newcommand{\bs}{\boldsymbol}
\newcommand{\probs}{\mathrm{Pr}}
\newcommand{\prob}[1]{\probs \left( {#1} \right)}
\newcommand{\entps}{\mathrm{H}}
\newcommand{\entp}[1]{ \entps \left( {#1} \right) }
\newcommand{\renyi}[2]{ \entps_{#1} \left( {#2} \right) }
\newcommand{\minfos}{\mathrm{I}}
\newcommand{\minfo}[2]{ \minfos \left( {#1} ; {#2} \right) }
\newcommand{\expes}{\mathrm{E}}
\newcommand{\expe}[1]{ \expes \left( {#1} \right) }
\newcommand{\LOCAL}{\mathsf{LOCAL}}
\newcommand{\execs}{\eta} 
\newcommand{\exec}[3]{ \execs^{#1}_{#2} ( {#3} )} 
\newcommand{\hists}{h} 
\newcommand{\hist}[2]{\hists^{#1}_{#2}} 
\newcommand{\pref}[2]{(#1)_{ [ {#2} ]}} 
\newcommand{\leaks}{\Phi} 
\newcommand{\leak}[3]{ \leaks^{#1}_{#2} ( {#3} ) } 
\newcommand{\leacs}{\Psi} 
\newcommand{\leac}[3]{ \leacs^{#1}_{#2} ( {#3} ) } 
\newcommand{\wcoms}{\alpha} 
\newcommand{\wcom}[2]{ \wcoms_{#1} ( {#2} )} 
\newcommand{\acoms}{\beta} 
\newcommand{\acom}[2]{ \acoms_{#1} ( {#2} )} 
\newcommand{\filts}{\delta} 
\newcommand{\filt}[1]{ \filts ( {#1} ) } 
\newcommand{\card}[1]{ | {#1} | } 
\newcommand{\actes}{\mathscr{A}} 
\newcommand{\acte}[1]{ \actes ( {#1} ) } 
\newcommand{\discs}{d^*} 
\newcommand{\disc}[3]{\discs_{#1}({#2};{#3})}
\newcommand{\depts}{d} 
\newcommand{\dept}[2]{\depts_{#1}(#2)} 
\newtheorem{thm}{Theorem}[section]
\newtheorem{defn}[thm]{Definition}
\newtheorem{coro}[thm]{Corollary}
\newtheorem{lemma}[thm]{Lemma}
\newtheorem{prop}[thm]{Proposition}
\algnewcommand\algorithmicupon{\textbf{upon}}
\title{A Rudimentary Model for Low-Latency Anonymous Communication Systems}
\author{Yufan Zheng \\ Peking University \\ {\small \url{yufan.zheng@pku.edu.cn}}}
\date{}
\begin{document}
	
	\maketitle
	
	{\abstract{
	In this paper we present a rudimentary model for low-latency anonymous communication systems.
	Specifically, we study distributed OR algorithm as an abstract of the system.
	Based on our model, we give several satisfactory lower bounds of anonymity leakage of a deterministic OR algorithm.
	Some of them reveal a trade-off between anonymity and communication complexity.
	For the randomized OR algorithm, we only give a relatively trivial but possibly tight lower bound when leaving out communication complexity.
	And we find the relationship between our model and some open case in the study of secret sharing scheme, if considering communication complexity.
}}
	
	\newpage
	
	\section{Introduction}
An \emph{anonymous communication system}, as a concept introduced by Chaum \cite{chaum1981untraceable} in 1981, hides the identities of nodes or participants when they communicate.
In recent years, it has gained more and more popularity because of various motivations, e.g., threat of pervasive surveillance or personal privacy preferences.
The best-known deployed design, Tor \cite{dingledine2004tor}, is used daily by about three million people \cite{metricstor}.
For a survey on designing, developing, and deploying systems for anonymous communication, see \cite{shirazi2016survey}.

Anonymous communication systems can be classified into low-latency and high-latency systems. Generally speaking, high-latency designs, such as Mixminion \cite{danezis2003mixminion} and Dissent \cite{corrigan2010dissent}, are secure even against a powerful global passive adversary, at the cost of giving up real-time communication.
In contrast, low-latency designs are more vulnerable to attacks, resulting in a certain degree of anonymity leakage.
In this paper we will focus on low-latency systems.
To be exact, our interest is how secure a low-latency anonymous communication system could be.
In other words, we wonder the \emph{theoretical bound} of the performance of systems.

There are already various formalizations of anonymous communication systems such as AnoA \cite{backes2013anoa} and UC \cite{canetti2001universally}, which are complete frameworks for analyzing anonymous communication systems.
However, these frameworks are too complicated for our purpose, though they are powerful when dealing with a concrete system or protocol.
Instead, we will try to build a model as simple as we can, so that we will be able to derive some meaningful result.
The most important simplification is that we do \emph{not} consider cryptography in our model, although a variant of model will be introduced in Section \ref{subsec:deterministicc}, which is involved with cryptography to some extent.
Another simplification is that we only consider a brief moment of the system, without regarding some complex dynamics of it, e.g., node churn in a peer-to-peer system\footnote{In such a system, nodes participate only for a short period of time, and the topology changes quickly.}.

We can assume that the low-latency anonymous communication system we will study is message-based, which means that the fundamental function of the system is to transmit messages among its nodes, where every receiver of the message would not obtain any information about the sender.
This assumption is reasonable, since all types of anonymous communication systems can be reduced to message-based ones.

\subsection{Modeling}
Though a strict mathematical description of our model will be given in Section \ref{sec:preliminaries}, it is necessary to briefly explain in advance how we obtain the model and why we accept the model. 

\nosection{Low-Latency Anonymous Communication Systems}
In order to build a general model for low-latency anonymous communication systems, we should grasp the most essential property of them.
Consider a low-latency anonymous communication system with delay at most $\Delta$.
By correctness of the system, we can assert that each node in the system can tell whether at least one node has sent some message to it in the last $\Delta$ time.
In other words, suppose the network topology does not change in the last $\Delta$ time, then let $V$ be the collection of all nodes in the system, $t$ be a node in $V$, and $I(v) = \bs{1}\{ v \text{ sent some message to } t \text{ in the last } \Delta \text{ time} \}$.\footnote{Symbol $\bs{1}\{P\}$ takes value $1$ if proposition $P$ is true, or $0$ if it is false.}
We have
\begin{prop}
	The low-latency anonymous communication system solves a distributed OR problem.
	To be exact, node $t$ can give the result of $\bigvee_{v \in V} I(v)$.\footnote{Surely it is feasible to replace $\bigvee_{v \in V} I(v)$ by $\sum_{v \in V} I(v)$ or something else for a stronger result, but it suffices for this paper to use the former.}
\end{prop}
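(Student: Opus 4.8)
The plan is to observe that this proposition is essentially \emph{definitional}: it records the modelling decision that the last-$\Delta$ behaviour of the system, regarded as a map from the node inputs $I(\cdot)$ to node $t$'s output, computes $\bigvee_{v \in V} I(v)$. To turn this into a proof I would isolate the two minimal correctness guarantees that any usable low-latency anonymous system must provide --- \textbf{bounded-delay delivery}: if $v$ sends a message to $t$ then $t$ receives it, and (by the latency bound) within time $\Delta$; and \textbf{no fabrication}: $t$ receives a message only if some node actually sent one --- and show that these are precisely the completeness and soundness of the OR computation. Under the brief-moment assumption from the introduction I would take the relevant window to be the last $\Delta$ units of time and assume nothing was in flight before it, so that every message $t$ receives is attributable to this window.

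Next I would handle the timing bookkeeping and the two directions. Node $t$ simply waits out the maximal delay $\Delta$ and then reports whether it has received at least one message. For completeness: if $I(v) = 1$ for some $v$, that node sent a message to $t$ inside the window, and bounded-delay delivery guarantees $t$ has it in hand by the end of the wait, so $t$ outputs $1$. For soundness: if $I(v) = 0$ for all $v$, no node sent a message to $t$, so by no fabrication $t$ receives nothing and outputs $0$. Combining the two cases, $t$'s output equals $\bigvee_{v \in V} I(v)$, which is the claim; the same argument with ``number of messages received'' in place of ``at least one'' gives the $\sum_{v \in V} I(v)$ strengthening mentioned in the footnote.

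The hard part will be the interface between anonymity and observability: one must argue that $t$ can tell that \emph{a} message has arrived even though the system is built precisely to hide \emph{who} sent it. I would resolve this by stressing that anonymity constrains the attribution of a delivery, not its occurrence --- the delivered messages are, by definition, part of the recipient's local view, so ``the number of messages $t$ received'' is a well-defined function of $t$'s transcript. A secondary subtlety is cover or dummy traffic, which a deployed system may inject and which could spuriously inflate that count; in this rudimentary model I would sidestep it by reading ``$v$ sent a message to $t$'' as a genuine payload addressed to $t$ and assuming $t$ can separate such deliveries from relayed or cover traffic. Finally I would note that the static-topology and brief-moment assumptions already in force are exactly what let the window be pinned down cleanly; absent them, received messages would need timestamps to be assigned to the correct $\Delta$-window, but nothing deeper in the argument would change.
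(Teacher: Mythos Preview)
Your reading is correct: the paper does not prove this proposition at all. It is stated in the modelling subsection of the introduction as a motivating observation---immediately after defining $I(v)$ the authors simply assert the proposition and move on (``The property above allows us to only focus on a simple distributed algorithm\ldots''). There is no proof environment, no argument, not even a sentence of justification beyond the setup preceding it.

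Your proposal therefore does more than the paper: you correctly identify the statement as essentially definitional and then unpack it into the two correctness guarantees (bounded-delay delivery and no fabrication) that make the OR computation sound and complete. That elaboration is sensible and matches the spirit of the surrounding text, but be aware that you are supplying content the authors left implicit rather than reconstructing an existing proof.
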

The property above allows us to only focus on a simple distributed algorithm, if we can obtain some meaningful result after doing so.
Indeed, we will soon see that it does very well.
To determine the precise characteristic of the algorithm we will study, we should take the threat model into account.

\nosection{Threat Model and Anonymity Metric}
Generally speaking, an adversary is supposed to partly break the anonymity of the system, by identifying the sender of some message or at least revealing some information about it.
There are many possible cases which may happen in the system.
Among them, we are interested in the case the adversary knows that a node $t \in V$ recieved some message just now and that in the last $\Delta$ time there was only exactly one message transmitted by the system.
There are two reasons for our interest.
First, apparently the model will be simpler in this case.
Second, this is a plausible worst case, since more messages transmitted meanwhile bring more difficulty to the adversary.
We say that it is plausible because the system has relatively low latency $\Delta$, and the adversary can know that a node $t \in V$ receieved some message in many circumstances, e.g., when sender of the message is browsing the website controlled by the adversary, via the proxy built on the system.
To measure anonymity, we will use the classical entropy-based anonymity metric \cite{diaz2002towards, serjantov2002towards}.
The remaining issue is to determine the adversary's ability.
In this paper, we mainly consider an adversary who is external, passive and static, in terms of the standard terminology introduced by Raymond \cite{raymond2001traffic}.
However, it is possible to consider other types of adversaries by making some changes to our model, which will be discussed in Section \ref{subsec:otherissues}.

\nosection{Computational Model}
We will choose a synchronous model for distributed computing, which has two advantages.
First, although most of distributed systems are asynchronous in the real world, studying synchronous model is still a good start because it is relatively simpler.
Second, an adversary in synchronous model is stronger than one in asynchronous model, because synchronicity eliminates some uncertainties for the adversary.
And unfortunately in the real-world case, a system often behaves more like a synchronous one under existing attack techniques, e.g., end-to-end timing analysis \cite{levine2004timing, bauer2007low, zhu2010correlation}.

\subsection{Contribution}
In this paper we present a rudimentary model to analyze performances of low-latency anonymous communication systems from theoretical aspect. 
To be specific, we study \emph{distributed OR algorithm} as an abstract of real-world low-latency anonymous communication systems.
By introducing an external, passive and static adversary, we define \emph{anonymity leakage} of an OR algorithm, which is roughly consistent with the classical metric defined in practice.

After, we obtain some elementary results with our model.
Then, for deterministic OR algorithm, we give several lower bounds of anonymity leakage.
Theorem \ref{thm:dense} and Theorem \ref{thm:densec} give us general bounds, whatever the network topology is.
Theorem \ref{thm:sparse} and Corollary \ref{coro:sparsec} give us stronger bounds if the network topology can be specified by a relatively sparse graph, which usually corresponds to the case a real-world system is peer-to-peer with restricted topologies, e.g., Tarzan \cite{freedman2002tarzan} and ShadowWalker \cite{mittal2009shadowwalker}. 
Among them, Corollary \ref{coro:sparsec} and Theorem \ref{thm:densec} show a trade-off between anonymity and communication complexity.
For randomized case, we give a relatively trivial but possibly tight bound in Corollary \ref{coro:rcase}.
Unlike in the deterministic case, we find out the relationship between our model and some open case in the study of \emph{secret sharing scheme}, if considering communication complexity.

Finally, we discuss about the tightness issue on our bounds and something else related to our subject.

	\section{Preliminaries} \label{sec:preliminaries}

\subsection{Computational Model}
Throughout this paper, we study the standard $\LOCAL$ model \cite{linial1992locality, peleg2000distributed} for distributed computing.
It is also called \emph{synchronous network model} \cite{lynch1996distributed} sometimes.
We will just make a rough introduction here, and formal definition of the model can be found in \cite{peleg2000distributed}.

\nosection{Basic Concept}
In the $\LOCAL$ model, processors and bidirectional communication channels between them are represented by nodes and edges of a connected undirected {\it underlying graph} $G = (V,E)$, respectively.
With replicas of an {\it algorithm} $\Pi$ running at all nodes, the computation proceeds in synchronized {\it rounds}. 
In each round, every node $v \in V$ can receive messages sent by its neighbors in the previous round, do some local computation, and send at most one message to each its neighbor separately.
There is no restrictions on local computations, so we can simply regard every replica of $\Pi$ residing at $v \in V$, called \emph{process} $v$, as a (possibly infinite) state machine, which is more powerful than a Turing machine \cite{turing1937computable}.
For convenience, we require any algorithm $\Pi$ to be fundamentally {\it deterministic} in this paper.

\nosection{Initial Input, Execution and History}
An \emph{initial input} $I : V \to \mathcal{I}$ states what information each node $v \in V$ would get at the very beginning of computation, where $\mathcal{I}$ stands for the collection of possible inputs for a node.
An {\it execution} governed by algorithm $\Pi$ on underlying graph $G = (V, E)$ with initial input $I$ is denoted $\exec{\Pi}{G}{I}$.
For each edge $e = (u,v) \in E$ we define {\it history} 
\[
	\hist{\execs}{e} = \hist{\execs}{(u,v)} = (\hist{\execs}{u \to v}, \hist{\execs}{v \to u}) \in \mathcal{M}^{\omega} \times \mathcal{M}^{\omega}
\]
as the communication history of edge $e$ under execution $\execs = \exec{\Pi}{G}{I}$, which composed of two components for two direction of the communication, where $\mathcal{M}$ stands for the collection of possible messages (including empty message denoted $\lambda$) sent by some node through an edge within a round, and $\omega$ is the smallest infinite ordinal. 
To be specific, $\hist{\execs}{u \to v} = (m_1, m_2, \dots)$ indicates that $u$ sent a message $m_k$ to $v$ in round $k$ under $\execs$ for all $k \in \mathbb{Z}^+$. 
We also define \emph{prefix} of a history by 
\[
	\pref{\hist{\execs}{e}}{k} = (\pref{\hist{\execs}{u \to v}}{k}, \pref{\hist{\execs}{v \to u}}{k}) \mathrm{\ and\ } \pref{\hist{\execs}{u \to v}}{k} = (m_1, \dots, m_k)
\]
where $\hist{\execs}{u \to v} = (m_1, m_2, \dots)$ and $k \in \mathbb{N}$.

\nosection{Randomness}
In order to analyze a randomized algorithm, we allow every node $v \in V$ to acquire some extra information at the very beginning of computation from the \emph{random input} $R: V \to \{0,1\}^\omega$. 
Hence, each node $v \in V$ would hold an infinite binary string assigned by $R$.
An execution governed by algorithm $\Pi$ on underlying graph $G$ with initial input $I$ and random input $R$ is denoted $\exec{\Pi}{G}{I,R}$. 
If we let a random variable $\bs{R}$ satisfy that every bit of $\bs{R}(v)$ for every $v \in V$ are i.i.d. and uniformly distributed on $\{0,1\}$, we can achieve randomness by studying $\exec{\Pi}{G}{I,\bs{R}}$.

\nosection{Conventions}
Recall that we have assumed any algorithm $\Pi$ in this paper to be deterministic. 
So in the following text if we say that $\Pi$ is a randomized algorithm, we actually mean that $\Pi$ takes random input $R$.
We omit ``connected'' and ``undirected'' when we refer to an underlying graph $G$ by default.
Throughout this paper, we always let every node $v \in V$ be initially aware of the underlying graph $G$ and its ID ``$v$''.
Namely, complete topological knowledge of the network is always assumed to be included in $I(v)$ for every $v \in V$.
However, for simplicity, we will abuse notation by leaving out these information when defining or using the symbol $I$.
For random input $\bs{R}$ written as a random variable, we always follow the definition in the previous paragraph.
Hence we will no longer define $\bs{R}$ when referring to it in the following text.
Furthermore, note that a deterministic algorithm is randomized for sure.
Therefore without loss of generality we only study the randomized case when an algorithm is not specified to be whether randomized or deterministic.

By the nature of deterministic state machine we have
\begin{thm} \label{thm:consistency}
	Let $G = (V, E)$ be an underlying graph, $v_0 \in V$ be a node, $\Pi$ be an algorithm and $I, I'$ be inputs satisfying $I(v_0) = I'(v_0)$.
	Denote $\execs = \exec{\Pi}{G}{I}$ and $\execs' = \exec{\Pi}{G}{I'}$.
	If $\pref{\hist{\execs}{v \to v_0}}{k} = \pref{\hist{\execs'}{v \to v_0}}{k}$ for all $v \in V$ adjacent to $v_0$ in $G$, the state of process $v_0$ under $\execs$ would be identical to that under $\execs'$ from round $1$ to round $k+1$. 
	In addition, $\pref{\hist{\execs}{v_0 \to v}}{k+1} = \pref{\hist{\execs'}{v_0 \to v}}{k+1}$ holds for all $v \in V$ adjacent to $v$ in $G$. 
	Also, there is a same conclusion for $\execs = \exec{\Pi}{G}{I,R}$ and $\execs' = \exec{\Pi}{G}{I',R'}$ where $I(v_0) = I'(v_0)$ and $R(v_0) = R'(v_0)$.
\end{thm}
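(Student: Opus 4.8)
The plan is to prove this by induction on $k$, exploiting the fact that a process's behavior in each round is a deterministic function of its state, which in turn depends only on the messages it has received so far and its initial input. The base case is round $1$: since $I(v_0) = I'(v_0)$, process $v_0$ starts in the same state under $\execs$ and under $\execs'$, so the conclusion on states holds trivially from round $1$, and the messages $v_0$ sends in round $1$ are determined by this common initial state, giving $\pref{\hist{\execs}{v_0 \to v}}{1} = \pref{\hist{\execs'}{v_0 \to v}}{1}$ for every neighbor $v$; this covers $k = 0$.

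For the inductive step, suppose the claim holds for $k-1$, so the state of $v_0$ agrees under both executions from round $1$ through round $k$. To get the state at round $k+1$, note that the transition from the round-$k$ state to the round-$k+1$ state is governed by the (deterministic) local transition function of $\Pi$ applied to the round-$k$ state together with the messages $v_0$ receives at the end of round $k$ — that is, the $k$-th messages $\hist{\execs}{v \to v_0}$ over all neighbors $v$. By hypothesis $\pref{\hist{\execs}{v \to v_0}}{k} = \pref{\hist{\execs'}{v \to v_0}}{k}$, so in particular the $k$-th coordinates agree, and the round-$k$ states agree by the inductive hypothesis; hence the round-$(k+1)$ states agree. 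Finally, the messages $v_0$ sends in round $k+1$ are a deterministic function of its round-$(k+1)$ state, so $\pref{\hist{\execs}{v_0 \to v}}{k+1} = \pref{\hist{\execs'}{v_0 \to v}}{k+1}$ for every neighbor $v$, completing the induction. (The stray ``adjacent to $v$'' in the statement should read ``adjacent to $v_0$'', which is what the argument delivers.)

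The randomized case is identical once we observe that, with random inputs $R, R'$ satisfying $R(v_0) = R'(v_0)$, the initial state of $v_0$ — which now incorporates the random string $R(v_0)$ — is still the same under both executions, and the per-round transition and output functions remain deterministic given the state; so the same induction goes through verbatim with $I(v_0), R(v_0)$ in place of $I(v_0)$.

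I do not expect a genuine obstacle here: the statement is essentially the definition of the $\LOCAL$ model unwound one layer, and the only care needed is bookkeeping about which round's messages feed into which round's state — specifically, that messages sent in round $k$ are received (and acted upon) at the start of round $k+1$, so that knowing message-prefixes up to index $k$ suffices to pin down states up to index $k+1$. Making that indexing explicit, and being careful that "state from round $1$ to round $k+1$" refers to the sequence of states at the beginnings of those rounds, is the one place where sloppiness could creep in.
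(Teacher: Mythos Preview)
Your proposal is correct: the induction on $k$ is exactly the standard argument, and your care about the indexing (round-$k$ incoming messages determine the round-$(k{+}1)$ state, which in turn determines the round-$(k{+}1)$ outgoing messages) is the only subtlety. The paper, however, does not give a proof of this theorem at all---it explicitly omits it, remarking that the statement is a basic consequence of the deterministic state-machine semantics of the $\LOCAL$ model and pointing to the indistinguishability arguments in Lynch's textbook. So there is no paper proof to compare against; your write-up simply fills in what the authors left to the reader, along the expected lines.
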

We omit the proof of this fundamental theorem since the formal definition of the $\LOCAL$ model has not present here.
However, it has been used in lots of works, in spite of maybe being in different forms.
For example, it serves the same function as \emph{indistinguishability} does in Lynch's textbook \cite{lynch1996distributed}.

\nosection{Distributed OR Algorithm}
Generally speaking, given an underlying graph $G = (V, E)$ in which every node has an initial binary bit, an algorithm $\Pi$ is called \emph{OR algorithm} if some nodes finally could tell whether there exists a node holding ``1'' or not. 
We give the formal definition as follows.
\begin{defn}[OR algorithm] \label{defn:oralgo}
	Let $G = (V, E)$ be an underlying graph, $T$ be a non-empty subset of $V$, $\mathcal{I} = \{0, 1\}$, and $\Pi$ be an algorithm.
	Suppose for arbitrary initial input $I: V \to \mathcal{I}$ and arbitrary random input $R$ if randomness is introduced, every process $v \in T$ would finally terminate with state $q_s$ under execution $\exec{\Pi}{G}{I}$ or $\exec{\Pi}{G}{I,R}$, where $s = \bigvee_{v \in V} I(v)$, within $k_0$ rounds for some $k_0 \in \mathbb{N}$. 
	Then we call $\Pi$ an OR algorithm on underlying graph $G$ with \emph{target set} $T$.
\end{defn}
When discussing OR algorithm, we often refer to some particular initial input $I : V \to \{0,1\}$. 
For convenience we define them as follows.
\begin{eqnarray*}
	I_0 & : & v \mapsto 0 \\
	I_u & : & v \mapsto \bs{1}\{u = v\}
\end{eqnarray*}

\subsection{Threat Model and Anonymity Metric}
Suppose there is an OR algorithm $\Pi$ on underlying graph $G = (V, E)$ with target set $T$, and an execution $\bs{\execs} = \exec{\Pi}{G}{\bs{I}}$ or $\exec{\Pi}{G}{\bs{I}, \bs{R}}$, where initial input $\bs{I}$ and random input $\bs{R}$ are random variables.
We consider an adversary who 
\begin{itemize}
	\item[a)] has unlimited computational power,
	\item[b)] knows $\Pi, G, T$ but does not know $\bs{\execs}$,
	\item[c)] is sure that there exists a unique node $v \in V$ holding ``1'' called \emph{initiator} (and it is actually true), but does not have further a priori knowledge on $\bs{I}$,
	\item[d)] can eavesdrop on some edges $e \in F \subseteq E$ during execution $\bs{\execs}$.
\end{itemize}
By c) we mean that $\bs{I}$ is uniformly distributed on the set $\{I_v : v \in V\}$. 
By d) we mean that the adversary can observe $\hist{\bs{\execs}}{e}$ for each edge $e \in F$. 
The adversary's ultimate objective is to identify the initiator, i.e., to determine $\bs{I}$. 
However, a complete identification is impossible in most cases.
Even so, the adversary do learn some information about the initiator, or $\bs{I}$. 
Taking a) and b) into account, the adversary's ability of identification is only limited by information theory. 
So we use the entropy of $\bs{I}$ conditioned on the adversary's observation $\bs{O}$, or $\entp{\bs{I} | \bs{O}}$, to define anonymity of the initiator.
For example, if $F = \varnothing$ so that the adversary can not derive anything about $\bs{I}$ from its observation, anonymity of the initiator should be $\entp{\bs{I} | \bs{O}} = \entp{\bs{I}} = \log n$ bits\footnote{$\log n$ = $\log_2 n$.}, where $n = |V|$ is the number of nodes in $G$. 
Since we concern about the leakage of anonymity, we will use the metric
\[
	\minfo{\bs{I}}{\bs{O}} = \entp{\bs{I}} - \entp{\bs{I} | \bs{O}}
\]
to measure the anonymity of a certain algorithm $\Pi$. 
Obviously the less it is, the more anonymity $\Pi$ provided to the initiator in a given underlying graph.
\begin{defn}[Anonymity Leakage] \label{defn:leak}
	Let $\Pi$ be an OR algorithm on underlying graph $G = (V, E)$ with target set $T$, $F$ be a subset of $E$, $\bs{I}$ be uniformly distributed on the set $\{I_v : v \in V\}$, $\bs{\execs} = \exec{\Pi}{G}{\bs{I}}$ or $\exec{\Pi}{G}{\bs{I}, \bs{R}}$, and $\bs{O} = ( \hist{\bs{\execs}}{e} : e \in F )$.\footnote{$(f(x) : x \in X)$ denote a vector in $Y^X$, where $Y$ is the codomain of $f$.}
	The \emph{anonymity leakage} is defined by
	\[
		\leak{\Pi}{G,T}{F} = \minfo{\bs{I}}{\bs{O}}
	\]
	where $\minfos$ is the mutual information symbol.
\end{defn}
If $\Pi$ is deterministic, $I$ can determine $O$ so that $\entp{\bs{O} | \bs{I}} = 0$, we have
\[
	\leak{\Pi}{G,T}{F} = \minfo{\bs{I}}{\bs{O}} = \entp{\bs{O}} - \entp{\bs{O} | \bs{I}} = \entp{\bs{O}}.
\]

	\section{Lower Bound of Anonymity Leakage} \label{sec:impossibility}

\subsection{Randomized Case}
The following theorem tells us that target set $T$ is not important in most cases, when considering anonymity leakage.
\begin{thm} \label{thm:targetfree}
	Let $\Pi$ be a randomized OR algorithm on underlying graph $G = (V, E)$ with target set $T$. 
	Then for arbitrary $\varnothing \neq T' \subseteq V$, there exists an OR algorithm $\Pi'$ on $G$ with $T'$ such that
	\[
		\leak{\Pi}{G,T}{F} = \leak{\Pi'}{G,T'}{F}
	\] 
	holds for arbitrary $F \subseteq E$.
\end{thm}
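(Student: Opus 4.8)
The plan is to take $\Pi'$ to be ``run $\Pi$, then disseminate its answer to everyone by flooding'', set up so that the extra traffic tells the adversary nothing. Concretely, in $\Pi'$ each node $v$ keeps an internal simulation of the $\Pi$-process at $v$ and runs it indefinitely, so that in every round $k$ it knows the message $\Pi$ would put on each incident edge. In parallel, $\Pi'$ performs a broadcast: starting in round $k_0+1$ every $v \in T$ --- which, by Definition \ref{defn:oralgo}, has learned $s := \bigvee_{u \in V} I(u)$ by round $k_0$ --- repeatedly sends the bit $s$ to all its neighbours, and every other node forwards the first such bit it receives. On an edge $e = (u,v)$ the message $\Pi'$ sends in round $k$ is the pair whose first component is $\Pi$'s round-$k$ message on $e$ and whose second component is the broadcast symbol of round $k$ (the empty message $\lambda$ if there is none). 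A node outputs $s$ and enters state $q_s$ as soon as it learns $s$ --- every node of $T$ by round $k_0$ and, since $G$ is connected and $T \neq \varnothing$, every other node within a further $\mathrm{diam}(G) \le |V|-1$ rounds --- while its replica, being a possibly infinite state machine, keeps the simulation going. Hence, for every input $I : V \to \{0,1\}$, every node (in particular every $v \in T'$) terminates in $q_s$ within $k_0 + |V| - 1$ rounds, so $\Pi'$ is an OR algorithm on $G$ with target set $T'$.

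Next I would verify that $\leak{\Pi'}{G,T'}{F} = \leak{\Pi}{G,T}{F}$ for every $F \subseteq E$. Fix $F$, let $\bs{I}$ be uniform on $\{I_v : v \in V\}$, and put $\bs{\execs} = \exec{\Pi}{G}{\bs{I}, \bs{R}}$ and $\bs{\execs}' = \exec{\Pi'}{G}{\bs{I}, \bs{R}}$. For each $e \in F$, reading off the first components of $\hist{\bs{\execs}'}{e}$ recovers $\hist{\bs{\execs}}{e}$ exactly, while the second components form the broadcast history on $e$. But on the support of $\bs{I}$ we always have $s = \bigvee_{u} \bs{I}(u) = 1$, regardless of $\bs{R}$; hence every broadcast symbol is the bit $1$, and the round in which it first appears on $e$ is fixed by the distances in $G$ from the two endpoints of $e$ to $T$. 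So the broadcast history on $e$ is a constant depending only on $G$, $T$ and $e$, and the observations $\bs{O}' = (\hist{\bs{\execs}'}{e} : e \in F)$ and $\bs{O} = (\hist{\bs{\execs}}{e} : e \in F)$ are images of one another under a bijection that does not involve $\bs{I}$. Therefore $\minfo{\bs{I}}{\bs{O}'} = \minfo{\bs{I}}{\bs{O}}$, which is exactly the asserted identity.

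The step I expect to be the real obstacle --- and the crux of the argument --- is the claim that grafting a broadcast of the OR-value onto $\Pi$ leaks nothing. This is precisely where the anonymity distribution is used: because $\bs{I}$ ranges only over the single-$1$ inputs $I_v$, the broadcast always carries the same bit, and, just as importantly, it emanates from the fixed set $T$ on a schedule dictated solely by distances in $G$, so neither its content nor its timing depends on which $I_v$ occurred. (If the broadcast instead started at the initiator, or carried a bit that varied over the support of $\bs{I}$, the equality would fail.) A secondary point to be careful about is that $\Pi'$ must keep the $\Pi$-simulation running forever, since the adversary sees all of $\bs{\execs}$; truncating $\Pi$ after round $k_0$ would in general give only $\leak{\Pi'}{G,T'}{F} \le \leak{\Pi}{G,T}{F}$. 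This is why each replica of $\Pi'$ is allowed to settle into its decision state $q_s$ without going inert, consistent with the model's view of a process as a possibly infinite state machine.
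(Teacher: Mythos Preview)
Your proposal is correct and follows essentially the same route as the paper: build $\Pi'$ by running $\Pi$ and then flooding the OR-value $s$ from $T$, and observe that on the support of $\bs{I}$ one always has $s=1$, so the broadcast contributes a constant and $\bs{O}$, $\bs{O}'$ determine one another. You are in fact slightly more careful than the paper on one point: by keeping the $\Pi$-simulation alive forever (rather than stopping it at round $k_0$) you guarantee that $\bs{O}'$ recovers all of $\bs{O}$, which is what secures the equality and not merely $\leak{\Pi'}{G,T'}{F}\le\leak{\Pi}{G,T}{F}$.
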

\begin{proof}
	We prove the theorem by constructing $\Pi'$. 
	Firstly, $\Pi'$ acts like $\Pi$ until round $k_0$, where $k_0$ depending on $\Pi$ is defined in Definition \ref{defn:oralgo}.
	Then, every processor $v \in T$ knows the value of $s = \bigvee_{v \in V} I(v)$ since $\Pi$ is an OR algorithm.
	$\Pi'$ lets all processors $v \in T$ start broadcasting the value of $s$ just in round $k_0 + 1$.
	The broadcast is achieved by some deterministic distributed broadcast algorithm included in $\Pi'$.
	Finally, all processors $v \in V$, including those in $T'$, would know the value of $s$ because $G$ is connected.
	So $\Pi'$ is an OR algorithm on $G$ with $T'$.
	
	Now consider the difference between $\hist{\execs}{e}$ and $\hist{\execs'}{e}$, where $\execs = \exec{\Pi}{G}{I, R}$ and $\execs' = \exec{\Pi'}{G}{I, R}$ with arbitrary $I$ and $R$. 
	Note that $\Pi'$ always does the same thing regardless of which value $I$ and $R$ would take.
	Therefore, if someone know one of $\hist{\execs}{e}$ and $\hist{\execs'}{e}$, the another can be inferred, or determined, without any information of $I$ and $R$. 
	As a result, the conclusion also holds between $O$ and $O'$, where $O = (\hist{\execs}{e} : e \in F )$ and $O' = ( \hist{\execs'}{e} : e \in F )$ for arbitrary $F \subseteq E$.
	Thus,
	\[
		\leak{\Pi}{G,T}{F} = \minfo{\bs{I}}{\bs{O}} = \minfo{\bs{I}}{\bs{O}, \bs{O}'} = \minfo{\bs{I}}{\bs{O}'} = \leak{\Pi'}{G,T'}{F},
	\]
	where $\bs{I}$ is uniformly distributed on the set $\{I_v : v \in V\}$.
\end{proof}
Then we arrive at the following theorem, which is elementary but important.
Actually, it could be easily deduced from some most basic fact in communication complexity theory, which was introduced by Yao \cite{yao1979some} in 1979.
However, we will still make a rigid proof because it is more tedious to show that our model is compatible with Yao's.
\begin{thm} \label{thm:k2}
	Let $K_2$ be the graph consisting of two nodes $u$ and $v$, which are connected by a single edge $e$. Then any randomized OR algorithm $\Pi$ on $K_2$ with arbitrary target set $T$ will satisfy that
	\[
		\hist{\execs_u}{e} \neq \hist{\execs_v}{e},
	\]
	where $\execs_u = \exec{\Pi}{K_2}{I_u, R_u}$ and $\execs_v = \exec{\Pi}{K_2}{I_v, R_v}$ with arbitrary $R_u$ and $R_v$.
\end{thm}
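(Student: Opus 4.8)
The plan is to argue by contradiction. Suppose there were a randomized OR algorithm $\Pi$ on $K_2$ with target set $T$, together with some random inputs $R_u$ and $R_v$, such that $\hist{\execs_u}{e} = \hist{\execs_v}{e}$, where $\execs_u = \exec{\Pi}{K_2}{I_u, R_u}$ and $\execs_v = \exec{\Pi}{K_2}{I_v, R_v}$. Intuitively, these two executions must be distinguished by the algorithm, since $\bigvee I_u = 1 = \bigvee I_v$ but the two inputs place the ``1'' at different nodes; yet the point is subtler than that, because $I_u$ and $I_v$ do give the \emph{same} OR value, so I need to manufacture a genuinely inconsistent scenario. The standard move is to introduce a third execution on the constant-zero input and play the two halves of the edge against each other.

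Concretely, the key steps, in order. First, by Theorem \ref{thm:targetfree} it suffices to prove the statement for one convenient choice of $T$; take $T = \{u\}$ (or $T = V$), since changing $T$ preserves all histories up to a deterministic post-processing and in particular preserves whether two histories are equal. Second, consider the execution $\execs_0 = \exec{\Pi}{K_2}{I_0, R}$ for a suitable $R$ with $R(u) = R_u(u)$ and $R(v) = R_v(v)$; here $\bigvee_{w} I_0(w) = 0$, so every process in $T$ must terminate in state $q_s$ with $s = 0$. Third, apply the consistency theorem (Theorem \ref{thm:consistency}) twice. On the $u$ side, $I_0(u) = 0 = I_v(u)$ and $R(u) = R_u(u)$ is false — wait, we must be careful about which inputs agree: choose $R$ so that $R(u) = R_u(u)$ and $R(v) = R_v(v)$. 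Then process $u$ in $\execs_0$ has the same initial input bit and same randomness as process $u$ in $\execs_v$ only if $I_0(u) = I_v(u)$, which holds since both are $0$. Inductively, if the incoming history $\pref{\hist{\execs_0}{v \to u}}{k}$ equals $\pref{\hist{\execs_v}{v \to u}}{k}$, then $u$'s state and outgoing messages agree through round $k+1$; symmetrically on the $v$ side, using $I_0(v) = 0 = I_u(v)$ and $R(v) = R_v(v)$, $v$'s behaviour in $\execs_0$ matches $v$'s behaviour in $\execs_u$ as long as the incoming histories match. The hypothesis $\hist{\execs_u}{e} = \hist{\execs_v}{e}$ is exactly what glues these two inductions together: the messages $u$ receives in $\execs_0$ are the messages $v$ sends, which (by the $v$-side induction) are the messages $v$ sends in $\execs_u$, which equal the messages $v$ sends in $\execs_v$ (by the hypothesis on $e$), which is precisely what $u$ expects on the $u$-side induction. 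A simultaneous induction on $k$ then shows $\hist{\execs_0}{e} = \hist{\execs_u}{e} = \hist{\execs_v}{e}$, and moreover the state of process $u$ in $\execs_0$ is identical at every round to its state in $\execs_v$ (with input $I_v$, OR value $1$). Fourth, derive the contradiction: in $\execs_0$ process $u \in T$ must eventually reach the terminating state $q_0$, but in $\execs_v$ it reaches $q_1 \neq q_0$, and we have just shown these states coincide round by round — contradiction.

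The main obstacle is bookkeeping the simultaneous induction cleanly: Theorem \ref{thm:consistency} is phrased for a single ``special'' node $v_0$ whose input is fixed, and here I need to run it with $v_0 = u$ relating $\execs_0$ to $\execs_v$, and with $v_0 = v$ relating $\execs_0$ to $\execs_u$, and then chain the conclusions through the assumed equality of histories on $e$. One has to check that the random inputs can indeed be chosen coordinatewise ($R(u) := R_u(u)$, $R(v) := R_v(v)$) — this is fine because $R$ only assigns one string per node and the two nodes are independent — and that ``state identical from round $1$ to round $k+1$'' propagates to ``terminates in the same state'', which holds because termination is a property of the state and the algorithm is deterministic given its input and randomness. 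Everything else is a routine unwinding of the $\LOCAL$ model. A slightly slicker alternative, if one prefers not to invoke Theorem \ref{thm:targetfree}, is to note that the argument above never used anything about $T$ beyond $u \in T$ or $v \in T$; handling the remaining case (where $T$ contains neither, which cannot happen since $T \subseteq V = \{u,v\}$ is non-empty) is vacuous, so one can just do it directly.
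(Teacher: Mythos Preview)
Your overall plan is exactly the paper's: argue by contradiction, introduce a third execution $\execs_0$ on the all-zero input $I_0$, and run a simultaneous induction via Theorem~\ref{thm:consistency} at $v_0=u$ and $v_0=v$, using the assumed equality $\hist{\execs_u}{e}=\hist{\execs_v}{e}$ to glue the two inductions, finally contradicting correctness at a node in $T$.

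There is, however, one concrete slip that breaks the argument as written: your choice of the random input for $\execs_0$ is swapped. You set $R(u)=R_u(u)$ and $R(v)=R_v(v)$, but then claim that ``process $u$ in $\execs_0$ has the same initial input bit and same randomness as process $u$ in $\execs_v$''. That requires $R(u)=R_v(u)$, not $R_u(u)$. With your choice, node $u$ in $\execs_0$ carries $(I_0(u),R(u))=(0,R_u(u))$, which matches neither $\execs_u$ at $u$ (input is $1$) nor $\execs_v$ at $u$ (randomness is $R_v(u)$), so Theorem~\ref{thm:consistency} cannot be invoked at $u$ against either execution; the same problem occurs at $v$. The correct assignment, which the paper uses, is
\[
R_0(u)=R_v(u),\qquad R_0(v)=R_u(v),
\]
so that $u$'s local data in $\execs_0$ agrees with $u$'s local data in $\execs_v$, and $v$'s local data in $\execs_0$ agrees with $v$'s local data in $\execs_u$. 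With this fix your two applications of Theorem~\ref{thm:consistency} and the ensuing induction go through verbatim.

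A minor remark: the detour through Theorem~\ref{thm:targetfree} is unnecessary (and a bit awkward, since that theorem replaces $\Pi$ by a modified $\Pi'$). Your own ``slicker alternative'' is precisely what the paper does: since $T\subseteq\{u,v\}$ is non-empty, assume without loss of generality $u\in T$ and finish directly.
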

\begin{proof}
	Prove by contradiction. 
	Suppose $\hist{\execs_u}{e} = \hist{\execs_v}{e}$.
	Without loss of generality assume $u \in T$.
	Let $\execs_0 = \exec{\Pi}{K_2}{I_0, R_0}$ where $R_0(u) = R_v(u)$ and $R_0(v) = R_u(v)$.
	Note that $I_0(u) = I_v(u) = 0$ and $I_0(v) = I_u(v) = 0$.
	Using Theorem \ref{thm:consistency} by substituting $u$ or $v$ for $v_0$ in it respectively, we get
	\[
		\left\{
		\begin{array}{c}
		\pref{\hist{\execs_v}{v \to u}}{k} = \pref{\hist{\execs_0}{v \to u}}{k} \implies \pref{\hist{\execs_v}{u \to v}}{k+1} = \pref{\hist{\execs_0}{u \to v}}{k+1} \\
		\pref{\hist{\execs_u}{u \to v}}{k} = \pref{\hist{\execs_0}{u \to v}}{k} \implies \pref{\hist{\execs_u}{v \to u}}{k+1} = \pref{\hist{\execs_0}{v \to u}}{k+1}
		\end{array}
		\right.
	\]
	Since $\hist{\execs_u}{e} = \hist{\execs_v}{e}$, by induction on $k$ we have $\hist{\execs_u}{e} = \hist{\execs_0}{e} = \hist{\execs_v}{e}$. 
	However, using Theorem \ref{thm:consistency} again for $u$ with $\execs_v$ and $\execs_0$ we get that the state of process $u$ under $\execs_v$ is identical to that under $\execs_0$ forever.
	By Definition \ref{defn:oralgo} this with $u \in T$ gives us $I_v(u) \vee I_v(v) = I_0(u) \vee I_0(v)$, which implies $1 = 0$.
	Absurd.
\end{proof} 
We can generalize Theorem \ref{thm:k2} by using a well-known technique of reduction, which was possibly first introduced by Lamport et al. \cite{lamport1982byzantine}.
\begin{coro} \label{coro:split}
	Let $\Pi$ be a randomized OR algorithm on underlying graph $G = (V, E)$ with target set $T$ and $u, v$ be two nodes in $G$. 
	Suppose that there is a set of edges $F \subseteq E$ whose removal disconnects $u$ from $v$.
	Then, 
	\[
		O^u_F = (\hist{\execs_u}{e} : e \in F) \neq (\hist{\execs_v}{e} : e \in F) = O^v_F,
	\]
	where $\execs_u = \exec{\Pi}{G}{I_u, R_u}$ and $\execs_v = \exec{\Pi}{G}{I_v, R_v}$ with arbitrary $R_u$ and $R_v$.
\end{coro}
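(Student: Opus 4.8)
The plan is to collapse the two sides of the cut onto the two vertices of $K_2$ and then invoke Theorem~\ref{thm:k2}. First I would let $V_u$ be the connected component of $u$ in the graph $G - F$ and put $V_v = V \setminus V_u$; since removing $F$ separates $u$ from $v$ we get $u \in V_u$, $v \in V_v$, and moreover every edge of $G$ with one endpoint in $V_u$ and one in $V_v$ must lie in $F$ --- call this set of ``crossing'' edges $F_0 \subseteq F$. I would then construct a randomized OR algorithm $\Pi'$ on the two-node graph $K_2$ with nodes $a, b$, single edge $\epsilon$, and target set $\{a, b\}$: process $a$ internally runs the replicas of $\Pi$ at all nodes of $V_u$, process $b$ those at all nodes of $V_v$; messages along edges internal to $V_u$ (resp.\ $V_v$), including the edges of $F \setminus F_0$ that happen to be internal, are delivered inside the simulating process, while in each round $k$ process $a$ sends across $\epsilon$ the tuple of round-$k$ messages that the $V_u$-endpoints of the edges of $F_0$ would send toward $V_v$, and symmetrically for $b$. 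Because a message is delayed by exactly one round both on $\epsilon$ and on the edges of $G$, round $k$ of $\Pi'$ simulates round $k$ of $\Pi$ on $G$; the bit held by $a$ selects, for the nodes of $V_u$, between the inputs prescribed by $I_u$ and by $I_0$ (and $b$ likewise between $I_v$ and $I_0$ on $V_v$), and $a$'s infinite random string is split into one string per node of $V_u$. Finally, since $\Pi$ is an OR algorithm and $T \neq \varnothing$, after round $k_0$ at least one simulated target, in $V_u$ or in $V_v$, is in state $q_s$ with $s = \bigvee_{w \in V} I(w)$; I would have whichever of $a, b$ owns such a target append $s$ to its $\epsilon$-message, so that both $a$ and $b$ learn $s$ within a couple more rounds and halt in $q_s$. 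Thus $\Pi'$ is a randomized OR algorithm on $K_2$, and under this translation the $K_2$-inputs $I_a, I_b$ correspond exactly to the $G$-inputs $I_u, I_v$.

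Then I would argue by contradiction. Suppose $O^u_F = O^v_F$ for some $R_u, R_v$. Let $R_a'$ (resp.\ $R_b'$) be the $K_2$-random input obtained by packing $R_u$ (resp.\ $R_v$) as above, and set $\execs_a = \exec{\Pi'}{K_2}{I_a, R_a'}$ and $\execs_b = \exec{\Pi'}{K_2}{I_b, R_b'}$; by the simulation property these simulate $\exec{\Pi}{G}{I_u, R_u}$ and $\exec{\Pi}{G}{I_v, R_v}$. The history $\hist{\execs_a}{\epsilon}$ is a fixed function of the sub-vector of $O^u_F$ indexed by $F_0$, together with the value $s$ inserted at the end and the execution-independent information of which side owns a target; the same function applied to $O^v_F$ yields $\hist{\execs_b}{\epsilon}$. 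Since $\bigvee_w I_u(w) = \bigvee_w I_v(w) = 1$, the inserted value is $s = 1$ in both cases, so $O^u_F = O^v_F$ forces $\hist{\execs_a}{\epsilon} = \hist{\execs_b}{\epsilon}$. This contradicts Theorem~\ref{thm:k2} applied to $\Pi'$ (with $a, b$ playing the roles of $u, v$), and therefore $O^u_F \neq O^v_F$.

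I expect the one genuinely load-bearing step to be the faithfulness of the simulation: one must check by induction on $k$ that the round-$k$ local state of $a$ encodes the round-$k$ states of all nodes of $V_u$ in the $G$-execution and that $a$'s round-$k$ message on $\epsilon$ is exactly the tuple of crossing messages. This rests on Theorem~\ref{thm:consistency} and on the round alignment, which is valid only because the one-round delay is uniform across $\epsilon$ and across the edges of $G$; one also has to keep track of the edges of $F$ lying inside $V_u$ or inside $V_v$, which never travel across $\epsilon$ and yet whose histories still occur in $O^u_F$, so it matters that only the $F_0$-part of $O^u_F$ is used to reconstruct $\hist{\execs_a}{\epsilon}$. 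A smaller but necessary point is the explicit exchange of $s$ built into $\Pi'$, which is what makes $\Pi'$ a bona fide OR algorithm even when $T$ lies entirely on one side of the cut.
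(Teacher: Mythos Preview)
Your proposal is correct and follows essentially the same reduction as the paper: collapse the two sides of the cut onto the two vertices of $K_2$, simulate $\Pi$ round-by-round, and invoke Theorem~\ref{thm:k2}. The only differences are cosmetic: the paper hardcodes the specific $R_u,R_v$ from the contradiction hypothesis into $\Pi'$ (making it deterministic rather than randomized) and handles the target set by assuming without loss of generality that some $t\in T$ lies on one fixed side, whereas you keep $\Pi'$ randomized and broadcast $s$ across $\epsilon$ after round $k_0$ in the spirit of Theorem~\ref{thm:targetfree}; both choices work.
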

\begin{proof}
	Suppose for the sake of contradiction that $O^u_F = O^v_F$.
	We will construct an deterministic OR algorithm $\Pi'$ on $K_2$ violating Theorem \ref{thm:k2}.
	Define $V_u$ as the collection of nodes which are in the same connected component with $u$ in graph $(V, E-F)$.
	And $V_v$ is defined in a similar way.
	Without loss of generality we can assume that there exists a node $t \in T$ not included in $V_v$.
	Let $x$ and $y$ be the two nodes in $K_2$.
	We construct $\Pi'$ as follows.
	
	In general, $\Pi'$ under execution $\exec{\Pi'}{K_2}{I'}$ let processes $x$ and $y$ keep simulating the execution $\exec{\Pi}{G}{I,R}$, step by step.
	To be specific, in round $k$, $x$ keep track of the states of all the processes in $V - V_v$ in the $k$-th round of $\exec{\Pi}{G}{I,R}$, as well as the messages among them.
	Similarly, $y$ is responsible for $V_v$. They assign the initial input and random input by 
	\begin{eqnarray*}
		& \forall w \in V, & I(w) = 
		\left\{
			\begin{array}{ll}
				I'(x), & w = u \\
				I'(y), & w = v \\
				0, & \text{else}
			\end{array}
		\right. \\
		& \forall w \in V - V_v, & R(w) =
		\left\{
			\begin{array}{ll}
				R_u(w), & \text{if } I'(x) = 1 \\
				R_v(w), & \text{if } I'(x) = 0
			\end{array}
		\right. \\
		& \forall w \in V_v, & R(w) =
		\left\{
			\begin{array}{ll}
				R_v(w), & \text{if } I'(y) = 1 \\
				R_u(w), & \text{if } I'(y) = 0
			\end{array}
		\right. 
	\end{eqnarray*}
	which obviously can be achieved without information exchange between $x$ and $y$.
	The remaining question is to simulate those messages between a process in $V-V_v$ and one in $V_v$.
	Actually this part will be the only use of the edge $(x,y)$ in $K_2$ by $\Pi'$.
	Those simulated messages generated in a round under $\exec{\Pi}{G}{I,R}$ will be sent through the edge by some deterministic coding at once.
	The coding should guarantee that the reciver can obtain enough information from it to continue simulating.
	When $x$ finds that simulated $t$ terminates with state $q_s$, it also terminates with the same state.
	
	It is easy to see that $\Pi'$ is an OR algorithm with target set $\{x\}$, so we will not do rigid reasoning here.
	Since $V - V_v$ and $V_v$ are disconnected in graph $(V, E-F)$, by definition of $\Pi'$, $\hist{\exec{\Pi'}{K_2}{I'}}{(x,y)}$ can be determined by $(\hist{\exec{\Pi}{G}{I, R}}{e} : e \in F)$.
	Furthermore, note that processes in $\exec{\Pi'}{K_2}{I_x}$ and $\exec{\Pi'}{K_2}{I_y}$ simulates $\exec{\Pi}{G}{I_u, R_u}$ and $\exec{\Pi}{G}{I_v, R_v}$, respectively.
	Therefore by supposition $O^u_F = O^v_F$, we conclude $\hist{\exec{\Pi'}{K_2}{I_x}}{(x,y)} = \hist{\exec{\Pi'}{K_2}{I_y}}{(x,y)}$ for $\Pi'$.
	Absurd.
\end{proof}
By Corollary \ref{coro:split}, we obtain our first result on anonymity leakage as follows.
\begin{coro} \label{coro:rcase}
	Let $\Pi$ be a randomized OR algorithm on underlying graph $G = (V, E)$ with target set $T$.
	Then, for arbitrary $F \subseteq E$ we have
	\[
		\leak{\Pi}{G,T}{F} \geq \log n + \sum_{V_i} \frac{n_i}{n} \log \frac{n_i}{n}
	\]
	where $n = |V|$, $n_i = |V_i|$, and $V_i$ runs through node set of every connected component in graph $(V, E-F)$. 
\end{coro}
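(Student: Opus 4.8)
The plan is to pin down exactly how much the adversary's transcript $\bs{O}$ can reveal, and then read the bound off a short entropy estimate. By Corollary~\ref{coro:split}, $\bs{O}$ always determines which connected component of $(V, E-F)$ the initiator lies in, and in the best possible case it reveals nothing beyond that; so the initiator can do no better than to be uniformly hidden inside its own component, which is precisely what the inequality asserts.

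Concretely, I would let $V_1, V_2, \dots$ be the vertex sets of the connected components of $(V, E-F)$, with $n_i = \card{V_i}$ and $\sum_i n_i = n$, and fix two nodes $u \in V_i$ and $v \in V_j$ with $i \neq j$. Since $u$ and $v$ lie in distinct components of $(V, E-F)$, the set $F$ is a set of edges whose removal disconnects $u$ from $v$, so Corollary~\ref{coro:split} gives $O^u_F \neq O^v_F$ for \emph{arbitrary} random inputs $R_u$ and $R_v$. Letting the randomness vary, this says the set of values $\bs{O}$ can take when the initiator is $u$ is disjoint from the set of values it can take when the initiator is $v$; taking unions over $u \in V_i$ and $v \in V_j$, the values of $\bs{O}$ arising from initiators in $V_i$ are disjoint from those arising from initiators in $V_j$ whenever $i \neq j$. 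Hence the support of $\bs{O}$ is partitioned by the component index, and there is a map $g$ with $g(\bs{O})$ equal to the index of the component containing the initiator. Put $\bs{J} = g(\bs{O})$; then $\bs{J}$ is at once a function of $\bs{O}$ and a function of $\bs{I}$, and $\{\bs{J} = i\}$ is exactly the event that the initiator lies in $V_i$.

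After that, the bound follows from a routine entropy estimate. Since $\bs{J}$ is a function of $\bs{O}$, conditioning on the coarser variable does not decrease conditional entropy, so $\entp{\bs{I} | \bs{O}} \le \entp{\bs{I} | \bs{J}}$ (note $\bs{I}$ is discrete, so $\entp{\bs{I} | \bs{O}}$ is well defined however large the space of $\bs{O}$ may be). Because $\bs{I}$ is uniform on $\{I_w : w \in V\}$, we get $\prob{\bs{J} = i} = n_i/n$, and conditioned on $\{\bs{J} = i\}$ the variable $\bs{I}$ is uniform on the $n_i$ inputs $\{I_w : w \in V_i\}$, so $\entp{\bs{I} | \bs{J} = i} = \log n_i$ and $\entp{\bs{I} | \bs{J}} = \sum_i \frac{n_i}{n}\log n_i$. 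Plugging this into $\leak{\Pi}{G,T}{F} = \minfo{\bs{I}}{\bs{O}} = \entp{\bs{I}} - \entp{\bs{I} | \bs{O}}$ with $\entp{\bs{I}} = \log n$ gives
\[
	\leak{\Pi}{G,T}{F} \ge \log n - \sum_i \frac{n_i}{n}\log n_i = -\sum_i \frac{n_i}{n}\log\frac{n_i}{n},
\]
the final equality using $\sum_i n_i = n$; this is the asserted bound.

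The one step that will call for genuine care is the passage from ``$O^u_F \neq O^v_F$ for every pair $u, v$ drawn from distinct components'' to ``the support of $\bs{O}$ is partitioned by the component index'': this requires invoking Corollary~\ref{coro:split} for \emph{all} cross-component pairs and then checking that the families of possible transcripts attached to distinct components are pairwise disjoint, so that $g$ is genuinely well defined. Everything after that is standard; in particular, no measure-theoretic subtlety arises from $\bs{O}$ possibly living in an uncountable space, since $\bs{I}$ is discrete and $\bs{O}$ is only ever conditioned on, never has its own entropy computed. It is also this step that explains why the bound ought to be essentially tight in general: whenever there is an OR algorithm whose transcript on $F$ reveals precisely the initiator's component and nothing further, all the inequalities above are equalities.
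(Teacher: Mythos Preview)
Your proposal is correct and follows essentially the same approach as the paper: both use Corollary~\ref{coro:split} on every cross-component pair to show that the possible observations $\bs{O}$ arising from initiators in distinct components are pairwise disjoint, so that $\bs{O}$ determines the component index, and then read off the bound from $\entp{\bs{I}} - \entp{\bs{I}\mid\bs{O}}$. The only difference is cosmetic: the paper packages the disjointness step via the sets $\mathcal{O}_{V_i}$ and leaves the final entropy computation implicit, whereas you introduce the coarsening $\bs{J}=g(\bs{O})$ and write out $\entp{\bs{I}\mid\bs{J}}=\sum_i (n_i/n)\log n_i$ explicitly.
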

\begin{proof}
	Note that if we can always determine which $V_i$ $v$ is in from an arbitrary $O = ( \hist{\exec{\Pi}{G}{I_v,R}}{e} : e \in F)$, the inequality will immediately follow by $\leak{\Pi}{G,T}{F} = \minfo{\bs{I}}{\bs{O}} = \entp{\bs{I}} - \entp{\bs{I} | \bs{O}}$ in Definition \ref{defn:leak}.
	Let $\mathcal{O}_{V_i} = \{( \hist{\exec{\Pi}{G}{I_v,R}}{e} : e \in F) : v \in V_i \}$.
	Certainly we have $O \in \bigcup_{V_i} \mathcal{O}_{V_i}$.
	Now we only need to prove that all $\mathcal{O}_{V_i}$ are mutually disjoint.
	Suppose for the sake of contradiction that $\mathcal{O}_{V_i} \cap \mathcal{O}_{V_j} \neq \varnothing$ with some $V_i, V_j$ such that $V_i \neq V_j$.
	We choose an $O \in \mathcal{O}_{V_i} \cap \mathcal{O}_{V_j}$.
	By definition of $\mathcal{O}_{V_i}$ we have
	\[
		( \hist{\exec{\Pi}{G}{I_u,R_u}}{e} : e \in F) = O = ( \hist{\exec{\Pi}{G}{I_v,R_v}}{e} : e \in F)
	\]
	for some $u \in V_i, v \in V_j$, which violates Corollary \ref{coro:split}.
\end{proof}

\subsection{Deterministic Case} \label{subsec:deterministic}
Lemma \ref{lemma:path} is the result of Corollary \ref{coro:split} in the deterministic case, which will be very useful when proving our lower bounds of anonymity leakage in this section.
\begin{lemma} \label{lemma:path}
	Let $\Pi$ be a deterministic OR algorithm on underlying graph $G = (V, E)$ with target set $T$ and $u, v$ be two nodes in $G$.
	Define
	\[
		F = \{ e \in E : \hist{\execs_u}{e} \neq \hist{\execs_v}{e} \},
	\]
	where $\execs_u = \exec{\Pi}{G}{I_u}$ and $\execs_v = \exec{\Pi}{G}{I_v}$. Then, $u$ and $v$ are connected in graph $(V, F)$.
	In particular, $|F| \geq d_G(u, v)$.\footnote{$d_G(u, v)$ denotes the distance between two nodes $u$ and $v$ in $G$.}
\end{lemma}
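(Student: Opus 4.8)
The plan is to derive Lemma~\ref{lemma:path} directly from Corollary~\ref{coro:split}, exploiting that a deterministic algorithm is a special case of a randomized one (as noted in the Conventions). First I would fix the deterministic OR algorithm $\Pi$, the two nodes $u, v$, and set $\execs_u = \exec{\Pi}{G}{I_u}$, $\execs_v = \exec{\Pi}{G}{I_v}$, and $F = \{ e \in E : \hist{\execs_u}{e} \neq \hist{\execs_v}{e} \}$ as in the statement. The goal is to show $u$ and $v$ lie in the same connected component of $(V, F)$. I would argue by contradiction: suppose they are disconnected in $(V, F)$. Then there is a cut, i.e., a set of edges $F' \subseteq E$ whose removal separates $u$ from $v$ in $G$, and moreover this cut can be chosen with $F' \cap F = \varnothing$ — concretely, take $F' $ to be the edge boundary of the connected component of $u$ in $(V, F)$; every such boundary edge is in $E$ but, by maximality of the component, is not in $F$.

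Now the contradiction comes from Corollary~\ref{coro:split} applied with the cut $F'$. Since $\Pi$ is deterministic, the random input is irrelevant, so $\execs_u$ and $\execs_v$ coincide with the executions appearing in that corollary (with any fixed $R_u, R_v$). Corollary~\ref{coro:split} asserts $(\hist{\execs_u}{e} : e \in F') \neq (\hist{\execs_v}{e} : e \in F')$, meaning there is at least one edge $e \in F'$ with $\hist{\execs_u}{e} \neq \hist{\execs_v}{e}$, i.e., $e \in F$. This contradicts $F' \cap F = \varnothing$. Hence $u$ and $v$ are connected in $(V, F)$. For the final clause, any $u$--$v$ path in $(V, F)$ uses at least $d_G(u,v)$ edges, all of which lie in $F$, so $|F| \geq d_G(u,v)$.

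The only step requiring a little care is the choice of the disconnecting set $F'$ and the verification that it is disjoint from $F$; this is a standard graph-theoretic fact (the edge boundary of a connected component is disjoint from the edges internal to that component, hence disjoint from all of $F$ since $F$'s edges all lie within components of $(V,F)$), and I expect it to be the main — though minor — obstacle, mostly a matter of stating it cleanly. Everything else is an immediate invocation of Corollary~\ref{coro:split} together with the observation that determinism lets us ignore the random inputs.
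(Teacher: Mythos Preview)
Your proposal is correct and follows essentially the same route as the paper: argue by contradiction and invoke Corollary~\ref{coro:split} on a cut disjoint from $F$. The only difference is cosmetic---the paper takes the full complement $E \setminus F$ as the disconnecting set rather than your edge-boundary $F'$, which spares the (minor) verification that $F' \cap F = \varnothing$.
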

\begin{proof}
	If $u$ and $v$ are unconnected in $(V, F)$, the removal of $E-F$ in $G$ will disconnect $u$ from $v$. 
	By Corollary \ref{coro:split} we have
	\[
		(\hist{\execs_u}{e} : e \in E-F) \neq (\hist{\execs_v}{e} : e \in E-F)
	\]
	which contradicts with the definition of $F$.
	The $d_G(u, v)$ bound of $|F|$ is a direct inference of the fact that $F$ connects $u$ and $v$.
\end{proof}
\begin{thm} \label{thm:sparse}
	Let $\Pi$ be a deterministic OR algorithm on underlying graph $G = (V, E)$ with target set $T$.
	Denote $n = |V|$ and $m = |E|$.
	The distance between two nodes $u$ and $v$ in $G$ is denoted $d_G(u,v)$.
	Then, the following inequality on anonymity leakage holds for arbitrary $k \in \mathbb{N}$.
	\[
		\frac{1}{m^k} \sum_{e_1,\dots,e_k \in E} \leak{\Pi}{G,T}{\{e_1, \dots, e_k\}} \geq - \log \left( \frac{1}{n^2} \sum_{u, v \in V} \left( 1 - \frac{d_G(u,v)}{m} \right)^k \right)
	\]
	Alternatively, for arbitrary $p \in [0,1]$ we have
	\[
		\expe{\leak{\Pi}{G,T}{\bs{F}}} \geq - \log \left( \frac{1}{n^2} \sum_{u, v \in V} \left( 1 - p \right)^{d_G(u,v)} \right)
	\]
	where distribution of $\bs{F}$ is given by that each edge $e \in E$ have possibility $p$ to be included in $\bs{F}$ independently.
\end{thm}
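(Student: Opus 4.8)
The plan is to reduce both displayed inequalities to a single pointwise entropy estimate followed by convexity of $-\log$. Since $\Pi$ is deterministic, the remark after Definition \ref{defn:leak} gives $\leak{\Pi}{G,T}{F} = \entp{\bs{O}}$ with $\bs{O} = (\hist{\bs{\execs}}{e} : e \in F)$, and here $\bs{O}$ is a deterministic function, say $g_F$, of $\bs{I}$, which is uniform on the $n$-element set $\{I_v : v \in V\}$. For any random variable $X$ with law $(p_i)$, Jensen applied to the concave function $\log$ (to the random quantity equal to $p_i$ with probability $p_i$) yields $\entp{X} = -\sum_i p_i \log p_i \geq -\log \sum_i p_i^2$; that is, Shannon entropy dominates the collision entropy. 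Applying this to $\bs{O}$ and writing $\bs{O}'$ for an independent copy driven by an independent uniform initiator $\bs{I}'$ gives
\[
	\leak{\Pi}{G,T}{F} \;\geq\; -\log \probs\left[ \bs{O} = \bs{O}' \right].
\]

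Next I would make the collision probability explicit. With $F_{uv} = \{ e \in E : \hist{\execs_u}{e} \neq \hist{\execs_v}{e} \}$ defined exactly as in Lemma \ref{lemma:path} (so $\execs_u = \exec{\Pi}{G}{I_u}$), one has $g_F(I_u) = g_F(I_v)$ precisely when $F \cap F_{uv} = \varnothing$, whence
\[
	\probs\left[ \bs{O} = \bs{O}' \right] = \frac{1}{n^2} \sum_{u, v \in V} \bs{1}\{ F \cap F_{uv} = \varnothing \}.
\]
The decisive structural input is Lemma \ref{lemma:path}, which gives $\card{F_{uv}} \geq d_G(u,v)$. For the first inequality I then average over $(e_1, \dots, e_k) \in E^k$: convexity of $-\log$ lets Jensen push the average inside the logarithm, after which interchanging the sums and using that the $e_i$ are drawn independently turns $\frac{1}{m^k} \sum_{e_1, \dots, e_k} \bs{1}\{ \{e_1, \dots, e_k\} \cap F_{uv} = \varnothing \}$ into $\bigl( 1 - \card{F_{uv}}/m \bigr)^k$; bounding this by $\bigl( 1 - d_G(u,v)/m \bigr)^k$ and using that $-\log$ is nonincreasing finishes it. The second inequality is the same argument with $(e_1, \dots, e_k)$ replaced by the random set $\bs{F}$ in which each edge appears independently with probability $p$: Jensen again brings $\expes$ inside $-\log$, and $\probs[ \bs{F} \cap F_{uv} = \varnothing ] = (1-p)^{\card{F_{uv}}} \leq (1-p)^{d_G(u,v)}$ does the rest.

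The only step requiring any thought is the identification of $\probs[\bs{O} = \bs{O}']$ with the combinatorial quantity $\frac{1}{n^2}\sum_{u,v} \bs{1}\{ F \cap F_{uv} = \varnothing \}$, together with the collision-entropy inequality; everything after that is two invocations of Jensen's inequality and the bound $\card{F_{uv}} \geq d_G(u,v)$ from Lemma \ref{lemma:path}. A minor point to verify is that $\{e_1, \dots, e_k\}$ may be a proper multiset, but the event $\{e_1, \dots, e_k\} \cap F_{uv} = \varnothing$ is equivalent to $e_i \notin F_{uv}$ for every $i$, so the product structure — and hence the clean factor $(1 - \card{F_{uv}}/m)^k$ — is unaffected.
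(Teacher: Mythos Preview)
Your proposal is correct and follows essentially the same route as the paper: bound Shannon entropy below by collision entropy, identify the collision probability as $\frac{1}{n^2}\sum_{u,v}\bs{1}\{F\cap F_{uv}=\varnothing\}$, apply Jensen for $-\log$ to pull the average over $F$ inside, and invoke Lemma~\ref{lemma:path} for $\card{F_{uv}}\ge d_G(u,v)$. The only cosmetic difference is that the paper writes everything out with the indicator product $\prod_i \bs{1}\{\hist{u}{e_i}=\hist{v}{e_i}\}$ rather than naming the set $F_{uv}$ and phrasing the collision as $F\cap F_{uv}=\varnothing$.
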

\begin{proof}
	The key of the proof is that for a random variable $\bs{X}$ we always have
	\[
		\entp{\bs{X}} \geq \renyi{2}{\bs{X}}
	\]
	where $\renyi{2}{\bs{X}}$ is the collision entropy, or R\'{e}nyi entropy of order $2$, defined by
	\[
		\renyi{2}{\bs{X}} = - \log \prob{\bs{X} = \bs{Y}}
	\]
	where $\bs{X}$ and $\bs{Y}$ are i.i.d. This can be easily deduced by Jensen's inequality (see \cite[p. 428]{blahut1987principles}).
	Because of determinism of $\Pi$, by Definition \ref{defn:leak} we have
	\[
		\leak{\Pi}{G,T}{F} = \entp{\bs{O}} \geq \renyi{2}{\bs{O}}
	\]
	where $\bs{I}$ is uniformly distributed on the set $\{I_v : v \in V\}$. 
	With shorthand $\hist{v}{e} = \hist{\exec{\Pi}{G}{I_v}}{e}$ we obtain
	\begin{eqnarray*}
		& & \frac{1}{m^k} \sum_{e_1,\dots,e_k \in E} \leak{\Pi}{G,T}{\{e_1, \dots, e_k\}} \\
		&\geq& \frac{1}{m^k} \sum_{e_1,\dots,e_k \in E} \renyi{2}{\hist{\exec{\Pi}{G}{\bs{I}}}{e_i} : i = 1,\dots,k } \\
		&=& \frac{1}{m^k} \sum_{e_1,\dots,e_k \in E} - \log \left( \frac{1}{n^2} \sum_{u, v \in V} \prod_{i = 1}^k \bs{1}\{ \hist{u}{e_i} = \hist{v}{e_i} \} \right) \\
		&\geq& - \log \left( \frac{1}{m^k} \sum_{e_1,\dots,e_k \in E} \frac{1}{n^2} \sum_{u, v \in V} \prod_{i = 1}^k \bs{1}\{ \hist{u}{e_i} = \hist{v}{e_i} \} \right) \\
		&=& - \log \left( \frac{1}{n^2} \sum_{u, v \in V} \frac{1}{m^k} \sum_{e_1,\dots,e_k \in E} \prod_{i = 1}^k \bs{1}\{ \hist{u}{e_i} = \hist{v}{e_i} \} \right) \\
		&=& - \log \left( \frac{1}{n^2} \sum_{u, v \in V} \left( \frac{1}{m} \sum_{e \in E} \bs{1}\{ \hist{u}{e} = \hist{v}{e} \} \right)^k \right) \\
		&=& - \log \left( \frac{1}{n^2} \sum_{u, v \in V} \left( \frac{1}{m} \left( m - \sum_{e \in E} \bs{1}\{ \hist{u}{e} \neq \hist{v}{e} \} \right) \right)^k \right) \\
		&\geq& - \log \left( \frac{1}{n^2} \sum_{u, v \in V} \left( \frac{1}{m} ( m - d_G(u,v) ) \} \right)^k \right) \\
		&=& - \log \left( \frac{1}{n^2} \sum_{u, v \in V} \left( 1 - \frac{d_G(u,v)}{m} \right)^k \right).
	\end{eqnarray*}
	The first ``$\geq$'' holds by the property of collision entropy we have discussed. 
	The second ``$\geq$'' follows from Jensen's inequality since $-\log(x)$ is convex.
	The third ``$\geq$'' is derived from Lemma \ref{lemma:path}.
	Similarly we also obatin
	\begin{eqnarray*}
		& & \expe{\leak{\Pi}{G,T}{\bs{F}}} \\
		&\geq& \expe{\renyi{2}{\hist{\exec{\Pi}{G}{\bs{I}}}{e} : e \in \bs{F} }} \\
		&=& \sum_{F \subseteq E} p^{|F|} (1-p)^{|E-F|} \cdot - \log \left( \frac{1}{n^2} \sum_{u, v \in V} \prod_{e \in F} \bs{1}\{ \hist{u}{e} = \hist{v}{e} \} \right) \\
		&\geq& - \log \left( \sum_{F \subseteq E} p^{|F|} (1-p)^{|E-F|} \frac{1}{n^2} \sum_{u, v \in V} \prod_{e \in F} \bs{1}\{ \hist{u}{e} = \hist{v}{e} \} \right) \\
		&=& - \log \left( \frac{1}{n^2} \sum_{u, v \in V} \sum_{F \subseteq E} p^{|F|} (1-p)^{|E-F|} \prod_{e \in F} \bs{1}\{ \hist{u}{e} = \hist{v}{e} \} \right) \\
		&=& - \log \left( \frac{1}{n^2} \sum_{u, v \in V} \sum_{F \subseteq E} \prod_{e \in E-F} (1-p) \prod_{e \in F} p \cdot \bs{1}\{ \hist{u}{e} = \hist{v}{e} \}  \right) \\
		&=& - \log \left( \frac{1}{n^2} \sum_{u, v \in V} \prod_{e \in E} \left( 1 - p + p \cdot \bs{1}\{ \hist{u}{e} = \hist{v}{e} \} \right) \right) \\
		&=& - \log \left( \frac{1}{n^2} \sum_{u, v \in V} \prod_{e \in E} \left( 1 - p \cdot \bs{1}\{ \hist{u}{e} \neq \hist{v}{e} \} \right) \right) \\
		&\geq& - \log \left( \frac{1}{n^2} \sum_{u, v \in V} \left( 1 - p \right)^{d_G(u,v)} \right).
	\end{eqnarray*}
\end{proof}
Suppose $\Pi$ is a deterministic OR algorithm on $G$ with $T$.
Then in its underlying graph $G = (V, E)$ every $F \subseteq E$ actually induces an equivalence relation $\sim_F$ on $V$, which can be defined by
\[
	u \sim_F v \text{ iff } O^u_F = O^v_F
\]
where symbol $O^v_F$ stands for $(\hist{ \exec{\Pi}{G}{I_v}}{e} : e \in F)$.
Let $c(F)$ be the number of equivalence classes split by relation $\sim_F$.
We have the following lemma.
\begin{lemma} \label{lemma:dense}
	Let $F$ be an arbitrary subset of $E$.
	\begin{itemize}
		\item[a)] There are at least $n - c(F)$ different $e \in E - F$ such that $c(F \cup \{e\}) \geq c(F) + 1$.
		\item[b)] $\displaystyle \leak{\Pi}{G,T}{F} \geq \frac{c(F)-1}{n-1} \log n$.
	\end{itemize}
\end{lemma}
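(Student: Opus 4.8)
My plan is to prove the two parts in order, since part~b) will follow from part~a) by iteratively adding edges. For part~a), fix $F \subseteq E$ and consider the equivalence relation $\sim_F$ with its $c(F)$ classes; I want to exhibit at least $n - c(F)$ edges $e \in E - F$ whose addition refines the partition. The key observation is a ``cut'' argument: if $C$ is any equivalence class of $\sim_F$ and $C \neq V$, then because $G$ is connected there is at least one edge $e = (x,y) \in E$ with $x \in C$ and $y \notin C$; I claim every such $e$ satisfies $\hist{\execs_x}{e} \neq \hist{\execs_y}{e}$, where $\execs_x = \exec{\Pi}{G}{I_x}$, $\execs_y = \exec{\Pi}{G}{I_y}$. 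Indeed, if $\hist{\execs_x}{e} = \hist{\execs_y}{e}$ then $e \notin \{e' \in E : \hist{\execs_x}{e'} \neq \hist{\execs_y}{e'}\}$, but by Lemma \ref{lemma:path} that edge set connects $x$ and $y$; since $x$ and $y$ lie in different classes of $\sim_F$ and removing $e$ leaves them connected through edges on which $x$ and $y$ ``disagree'', one derives that $x \not\sim_F y$ forces at least one of the connecting edges to lie outside $F$, and tracking this carefully shows $e$ itself must separate the class --- hence $c(F \cup \{e\}) \ge c(F)+1$. To count: assign to each class $C \neq V$ one boundary edge, and argue that a spanning-tree-style counting over the $c(F)$ classes yields at least $n - c(F)$ distinct such edges (the ``quotient graph'' on classes is connected, so it has at least $c(F) - 1$ edges, but we actually need the finer bound coming from the fact that merging classes one at a time costs at least one vertex each time --- this is where the $n - c(F)$ comes from rather than $c(F) - 1$).

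For part~b), I would argue by induction, building a chain $F = F_0 \subseteq F_1 \subseteq \cdots \subseteq F_r$ where at each step we add an edge supplied by part~a), so that $c(F_{j+1}) \ge c(F_j) + 1$ and eventually $c(F_r) = n$ (all singletons). By part~a) we can keep going as long as $c(F_j) < n$, so $r \ge n - c(F)$ steps are available; more precisely the number of ``new'' edges consumed is controlled. The anonymity leakage $\leak{\Pi}{G,T}{F} = \entp{\bs{O}}$ equals the entropy of the class label of the (uniformly random) initiator under $\sim_F$. Since $\bs{O}$ determines which of the $c(F)$ classes the initiator lies in, and the classes have sizes summing to $n$, I get $\entp{\bs{O}} = \sum_C \frac{|C|}{n}\log\frac{n}{|C|}$. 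To bound this below by $\frac{c(F)-1}{n-1}\log n$, the cleanest route is: the worst case (smallest entropy for a fixed number $c = c(F)$ of classes on $n$ elements) is one giant class of size $n - c + 1$ and $c-1$ singletons, giving $\entp{\bs{O}} = \frac{n-c+1}{n}\log\frac{n}{n-c+1} + \frac{c-1}{n}\log n \ge \frac{c-1}{n}\log n$, and then a further elementary estimate replacing $\frac1n$ by $\frac{1}{n-1}$ (or a direct convexity argument) upgrades this to the stated $\frac{c(F)-1}{n-1}\log n$.

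I expect the main obstacle to be part~a): specifically, the passage from ``$e$ is a boundary edge of a class'' to ``$c(F \cup \{e\}) \ge c(F) + 1$,'' and then the combinatorial bookkeeping that yields exactly $n - c(F)$ distinct edges rather than the weaker $c(F) - 1$. The subtlety is that a single edge might only separate one class, and different classes could a priori be served by the same edge; ruling this out, and showing the count is additive in the right way, will require setting up the quotient structure carefully and invoking connectivity of $G$ together with Lemma \ref{lemma:path} at each class boundary. Part~b) is then mostly a routine induction plus a convexity estimate on the entropy, which I would not grind through in detail.
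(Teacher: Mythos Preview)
Your approach to part~a) is misdirected, and this is the genuine gap. You propose to look at \emph{boundary} edges $e = (x,y)$ with $x \in C$, $y \notin C$ for a class $C$, and to show such $e$ satisfies $c(F \cup \{e\}) \ge c(F) + 1$. But adding $e$ to $F$ can only refine $\sim_F$; to make $c$ go up you must split some class, i.e.\ exhibit $u,v$ with $u \sim_F v$ and $\hist{\execs_u}{e} \neq \hist{\execs_v}{e}$. Your nodes $x,y$ are already in different classes, so even if $\hist{\execs_x}{e} \neq \hist{\execs_y}{e}$ this tells you nothing about splitting a class. Your own uncertainty about getting $c(F)-1$ versus $n - c(F)$ is the symptom: a quotient-graph spanning tree on the classes gives only $c(F)-1$ edges, and there is no way to upgrade that count from boundary edges alone.

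The paper's argument works \emph{inside} classes, not between them. It considers the set
\[
S \;=\; \bigcup_{\substack{u,v \in V \\ u \sim_F v}} \{\, e \in E : \hist{\execs_u}{e} \neq \hist{\execs_v}{e} \,\}
\;=\; \{\, e \in E : \exists\, u \sim_F v,\ u \nsim_{F\cup\{e\}} v \,\}.
\]
Every $e \in S$ lies in $E - F$ (since $u \sim_F v$ forces $\hist{\execs_u}{e} = \hist{\execs_v}{e}$ for $e \in F$) and, by the rewriting above, satisfies $c(F\cup\{e\}) \ge c(F)+1$. For the count, Lemma~\ref{lemma:path} says that for each pair $u \sim_F v$ the set $\{e : \hist{\execs_u}{e} \neq \hist{\execs_v}{e}\}$ connects $u$ to $v$; hence $S$ connects every equivalence class internally and so contains a spanning forest with $c(F)$ trees, giving $|S| \ge \sum_i (n_i - 1) = n - c(F)$. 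This is exactly the ``finer bound'' you were groping for.

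For part~b), your induction on a chain $F_0 \subseteq F_1 \subseteq \cdots$ is never actually used in your own sketch, and the direct entropy argument you then give is on the right track but incomplete: the claim that the minimum of $-\sum_i \tfrac{n_i}{n}\log\tfrac{n_i}{n}$ over integer partitions into $c$ parts is attained at $(n-c+1,1,\dots,1)$ needs a justification, and the ``elementary estimate'' upgrading $\tfrac{c-1}{n}\log n$ to $\tfrac{c-1}{n-1}\log n$ is not automatic. The paper bypasses both issues by invoking Theorem~\ref{thm:petrov} with $m=1$, which directly yields $-\sum_i \tfrac{n_i}{n}\log\tfrac{n_i}{n} \ge \tfrac{k-1}{n-1}\log n$ for any partition into $k$ parts.
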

\begin{proof}
	Denote $\hist{u}{e} = \hist{\exec{\Pi}{G}{I_u}}{e}$.
	Consider the edge set of $G$,
	\begin{eqnarray*}
		& & \bigcup_{\substack{u,v \in V \\ u \sim_F v}} \{ e \in E : \hist{u}{e} \neq \hist{v}{e} \} \\
		&=& \{e \in E : \exists u,v \in V,\ u \sim_F v,\ \hist{u}{e} \neq \hist{v}{e} \} \\
		&=& \{e \in E : \exists u,v \in V,\ u \sim_F v,\ u \nsim_{F \cup \{e\}} v \} 
	\end{eqnarray*}
	On one hand, the right-hand side shows that any of its element $e$, which is obviously not in $F$, gives an equivalence relation $\sim_{F \cup \{e\}}$ splitting $V$ into more than $c(F)$ equivalence classes.
	On the other hand, the left-hand side implies that it connects $u$ to $v$ in $G$ if $u \sim_F v$, by Lemma \ref{lemma:path}. 
	So it includes at least $n-c(F)$ edges, which follows from that it contains a spanning forest consisting of $c(F)$ spanning trees for $c(F)$ equivalence classes, respectively.
	Therefore a) holds. 
	
	For b), suppose $\sim_F$ split $V$ into $k$ equivalence classes $V_1, \dots, V_k$.
	Denote $n = |V|$ and $n_i = |V_i|$ for $i = 1, \dots, k$.
	By Definition \ref{defn:leak} we have
	\[
		\leac{\Pi}{G, T}{F} = \entp{\bs{O}} = - \sum_{i=1}^k \frac{n_i}{n} \log \frac{n_i}{n}.
	\]
	Applying Theorem \ref{thm:petrov} with $m = 1$ we immediately get the result.
\end{proof}
\begin{thm} \label{thm:dense}
	Let $\Pi$ be a deterministic OR algorithm on underlying graph $G = (V, E)$ with target set $T$.
	Denote $n = |V|$ and $m = |E|$.
	Then, the following inequality on anonymity leakage holds for arbitrary $k \in \mathbb{N}$.
	\[
		\frac{1}{m^k} \sum_{e_1,\dots,e_k \in E} \leak{\Pi}{G,T}{\{e_1, \dots, e_k\}} \geq \left( 1 - \left( 1 - \frac{1}{m} \right)^k \right) \log n
	\]
	Alternatively, for arbitrary $p \in [0,1]$ we have
	\[
		\expe{\leak{\Pi}{G,T}{\bs{F}}} \geq p \log n
	\]
	where distribution of $\bs{F}$ is given by that each edge $e \in E$ have possibility $p$ to be included in $\bs{F}$ independently.
\end{thm}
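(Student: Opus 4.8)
The plan is to reduce both displayed inequalities to Lemma \ref{lemma:dense}: part (b) converts anonymity leakage into the number of equivalence classes $c(F)$, and part (a) will be used to lower bound $\expe{c(\bs{F})}$. The crucial observation is that the two probabilistic models in the statement — the uniform $k$-tuple $(\bs{e}_1,\dots,\bs{e}_k)$ of edges, with $\bs{F}=\{\bs{e}_1,\dots,\bs{e}_k\}$, and the model where each edge independently lies in $\bs{F}$ with probability $p$ — both have the property that, conditioned on $|\bs{F}|=j$, the set $\bs{F}$ is a \emph{uniformly random} $j$-element subset of $E$. So it suffices to bound $\expe{c(F)}$ for a uniform $j$-subset $F$ and then average over $j$.

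The core estimate I would prove is: if $F$ is a uniformly random $j$-element subset of $E$, then $\expe{c(F)} \ge 1 + (n-1)\tfrac{j}{m}$. To get it, reveal the edges of $E$ in a uniformly random order $e_{(1)},\dots,e_{(m)}$, set $F_i=\{e_{(1)},\dots,e_{(i)}\}$ and $X_i=n-c(F_i)$. Then $X_0=n-1$ (all nodes are $\sim_\varnothing$-equivalent), $X_i\ge 0$, and $X_i$ is nonincreasing because enlarging $F$ only refines $\sim_F$. Conditioned on $F_{i-1}$, the edge $e_{(i)}$ is uniform over the $m-i+1$ edges of $E-F_{i-1}$, and by Lemma \ref{lemma:dense}(a) at least $X_{i-1}$ of those edges $e$ satisfy $c(F_{i-1}\cup\{e\})>c(F_{i-1})$; since $c$ is integer-valued, $X_{i-1}-X_i\ge \bs{1}\{c(F_i)>c(F_{i-1})\}$, hence $\expe{X_i\mid F_{i-1}}\le X_{i-1}\bigl(1-\tfrac{1}{m-i+1}\bigr)$. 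Taking expectations and multiplying these one-step bounds over $i=1,\dots,j$, the product telescopes to $\expe{X_j}\le (n-1)\tfrac{m-j}{m}$, i.e. $\expe{c(F_j)}\ge 1+(n-1)\tfrac{j}{m}$; and $F_j$ is exactly a uniform $j$-subset, as desired.

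Now I would assemble the bounds. In the independent-$p$ model, conditioning on $|\bs{F}|=j$ gives a uniform $j$-subset, so by the core estimate and linearity $\expe{c(\bs{F})}\ge 1+\tfrac{n-1}{m}\expe{|\bs{F}|}=1+(n-1)p$. In the $k$-tuple model, the law of $\bs{F}=\{\bs{e}_1,\dots,\bs{e}_k\}$ is invariant under permutations of $E$, hence again uniform on $j$-subsets given $|\bs{F}|=j$; a fixed edge avoids all of $\bs{e}_1,\dots,\bs{e}_k$ with probability $(1-\tfrac1m)^k$, so $\expe{|\bs{F}|}=m\bigl(1-(1-\tfrac1m)^k\bigr)$ and therefore $\expe{c(\bs{F})}\ge 1+(n-1)\bigl(1-(1-\tfrac1m)^k\bigr)$. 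Finally, taking expectations in Lemma \ref{lemma:dense}(b), $\leak{\Pi}{G,T}{F}\ge\tfrac{c(F)-1}{n-1}\log n$, yields $\expe{\leak{\Pi}{G,T}{\bs{F}}}\ge p\log n$ in the first case and $\expe{\leak{\Pi}{G,T}{\bs{F}}}\ge\bigl(1-(1-\tfrac1m)^k\bigr)\log n$ in the second; the left-hand side of the first displayed inequality is by definition $\expe{\leak{\Pi}{G,T}{\bs{F}}}$ for the uniform $k$-tuple, so both claims follow. (If $n=1$ both sides vanish and there is nothing to prove.)

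The only real work is the core estimate. The reason for revealing edges in a uniformly random order is that the revealed set always equals the current $F_{i-1}$, so Lemma \ref{lemma:dense}(a) applies to precisely the uniform pool $E-F_{i-1}$ of not-yet-revealed edges; this is what makes the contraction factors $\tfrac{m-i}{m-i+1}$ clean and lets them telescope exactly to $\tfrac{m-j}{m}$. Everything else — recognizing that both random models are mixtures of uniform subsets, computing $\expe{|\bs{F}|}$, and averaging Lemma \ref{lemma:dense}(b) — is routine.
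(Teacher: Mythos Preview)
Your proof is correct. Both you and the paper reduce everything to Lemma \ref{lemma:dense}: part (b) to pass from $\leaks$ to $c(F)$, and part (a) to control how $c$ grows when a fresh edge is added. The core estimate you prove, $\expe{c(F_j)}\ge 1+(n-1)\tfrac{j}{m}$ for a uniform $j$-subset $F_j$, is exactly the paper's inequality $\sum_{|F|=k}\tfrac{c(F)-1}{n-1}\ge\binom{m-1}{k-1}$ divided through by $\binom{m}{k}$; your random-ordering/telescoping argument is the same one-edge-at-a-time induction the paper runs, just phrased probabilistically.

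Where you genuinely diverge from the paper is in the organization. The paper runs two separate inductions: one on $k$-tuples sampled \emph{with replacement} (using $\sum_{e\in E}c(F\cup\{e\})\ge (m-1)c(F)+n$) for the first displayed inequality, and one on $k$-subsets sampled \emph{without replacement} (using $\sum_{e\in E-F}c(F\cup\{e\})\ge (m-|F|-1)c(F)+n$) for the second. You prove only the without-replacement estimate and then observe that both probabilistic models, conditioned on $|\bs{F}|=j$, yield a uniform $j$-subset, so a single core estimate plus a computation of $\expe{|\bs{F}|}$ in each model suffices. This unification is a mild but real simplification: one induction instead of two, and the $\bigl(1-(1-\tfrac1m)^k\bigr)$ factor drops out of the elementary formula $\expe{|\bs{F}|}=m\bigl(1-(1-\tfrac1m)^k\bigr)$ rather than from an inductive recursion. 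The paper's approach, by contrast, is slightly more self-contained in that it never needs the mixture decomposition.
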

\begin{proof}
	By Lemma \ref{lemma:dense} b), to verify the first inequality we only need to prove
	\[
		\frac{1}{m^k} \sum_{e_1,\dots,e_k \in E} \frac{c(\{e_1, \dots, e_k\})-1}{n-1} \geq 1 - \left( 1 - \frac{1}{m} \right)^k 
	\]
	We do by induction on $k$.
	\begin{itemize}
		\item Basis. 
		It is trivial when $k=0$.
		\item Inductive step.
		Suppose we have an edge set $F \subseteq E$.
		By Lemma \ref{lemma:dense} a),
		\[
			\sum_{e \in E} c(F \cup \{e\}) \geq m \cdot c(F) + (n - c(F)) = (m-1)c(F) + n
		\]
		Combining this and induction hypothesis we can complete the step.
		\begin{eqnarray*}
			& & \frac{1}{m^{k+1}} \sum_{e_1,\dots,e_{k+1} \in E} \frac{c(\{e_1, \dots, e_{k+1}\})-1}{n-1} \\
			&=& \frac{1}{m^{k+1}} \sum_{e_1,\dots,e_k \in E} \sum_{e_{k+1} \in E} \frac{c(\{e_1, \dots, e_k\} \cup \{e_{k+1}\})-1}{n-1} \\
			&\geq& \frac{1}{m^{k+1}} \sum_{e_1,\dots,e_k \in E} \frac{(m-1)c(\{e_1, \dots, e_k\}\})+n-m}{n-1} \\
			&=& \frac{m-1}{m} \frac{1}{m^{k}} \sum_{e_1,\dots,e_k \in E} \frac{c(\{e_1, \dots, e_k\}\})-1}{n-1} + \frac{1}{m^{k+1}} \sum_{e_1,\dots,e_k \in E} \frac{n-1}{n-1} \\
 			&\geq& \frac{m-1}{m} \left( 1 - \left( 1 - \frac{1}{m} \right)^k  \right) + \frac{1}{m} \\
 			&=& 1 - \left( 1 - \frac{1}{m} \right)^{k+1}
		\end{eqnarray*}
	\end{itemize}
	Next we verify the second inequality.
	Similarly, by Lemma \ref{lemma:dense} it is enough if
	\begin{eqnarray*}
		& & \sum_{F \subseteq E} p^{|F|} (1-p)^{|E-F|} \frac{c(F)-1}{n-1} \ge p \\
		&\Longleftarrow& \forall k \in \mathbb{N}, \sum_{\substack{F \subseteq E \\ |F|=k}} \frac{c(F)-1}{n-1} \geq \binom{m-1}{k-1}
	\end{eqnarray*}
	We prove it by induction as well.
	\begin{itemize}
		\item Basis. 
		It is trivial when $k=0$.
		\item Inductive step.
		Likewise, by Lemma \ref{lemma:dense} a),
		\[
			\sum_{e \in E-F} c(F \cup \{e\}) \geq (m-k) c(F) + (n - c(F)) = (m-k-1)c(F) + n
		\]
		Therefore,
		\begin{eqnarray*}
			& & \sum_{\substack{F \subseteq E \\ |F|=k+1}} \frac{c(F)-1}{n-1} \\
			&=& \frac{1}{k+1} \sum_{\substack{F \subseteq E \\ |F|=k}} \sum_{e \in E - F} \frac{c(F \cup \{e\})-1}{n-1} \\
			&\geq& \frac{1}{k+1} \sum_{\substack{F \subseteq E \\ |F|=k}} \frac{(m-k-1)c(F) + n - (m-k)}{n-1} \\
			&=& \frac{m-k-1}{k+1} \sum_{\substack{F \subseteq E \\ |F|=k}} \frac{c(F) - 1}{n-1} + \frac{1}{k+1} \sum_{\substack{F \subseteq E \\ |F|=k}} \frac{n-1}{n-1} \\
			&\geq& \frac{m-k-1}{k+1} \binom{m-1}{k-1} + \frac{1}{k+1} \binom{m}{k} \\
			&=& \binom{m-1}{k}
		\end{eqnarray*}
	\end{itemize}
\end{proof}

\section{Trade-off between Anonymity and Communication Complexity} \label{sec:tradeoff}
The \emph{communication complexity} is typically measured in terms of the total number of non-empty messages that are sent.
Note that we do not consider the number of bits in messages here.
\begin{defn}[Communication Complexity] \label{defn:cc}
	Let $\Pi$ be an algorithm on underlying graph $G = (V, E)$.
	Then the \emph{worst-case communication complexity} is defined by
	\[
		\wcom{G}{\Pi} = \sup_{I, R} \left\{ \sum_{\substack{e \in E}} \card{\hist{\exec{\Pi}{G}{I, R}}{e}} \right\}
	\]
	where $\card{\hist{\execs}{e}} = \card{\hist{\execs}{u \to v}} + \card{\hist{\execs}{v \to u}}$, in which $e = (u,v)$ and $\card{\hist{\execs}{u \to v}}$ stands for the number of non-empty messages in $\hist{\execs}{u \to v} = (m_1, m_2, \dots)$, i.e., $\card{\hist{\execs}{u \to v}} = \sum_{i \in \mathbb{Z}^+} \bs{1}\{m_i \neq \lambda \}$.
	Likewise, the \emph{average-case communication complexity} is defined by
	\[
		\acom{G}{\Pi, \bs{I}} = \expe{ \sum_{\substack{e \in E}} \card{\hist{\exec{\Pi}{G}{\bs{I}, \bs{R}}}{e}} }.
	\]
\end{defn}

\subsection{Deterministic Case} \label{subsec:deterministicc}
It is hard for us to improve the result about anonymity leakage in Section \ref{subsec:deterministic} even if we restrict  communication complexity of the algorithm.
However, we can find a trade-off between anonymity and communication complexity, if we consider a slightly different threat model.
That is, the adversary can \emph{not} distinguish between two non-empty messages.
This model corresponds to the case where communication is protected by a semantically secure cryptosystem in the real world.
One should note that though we are studying the deterministic case in this section, the model works for the randomized case too.
\begin{defn} \label{defn:leac}
	Let $\Pi$ be an OR algorithm on underlying graph $G = (V, E)$ with target set $T$, $F$ be a subset of $E$, $\bs{I}$ be uniformly distributed on the set $\{I_v : v \in V\}$, $\bs{\execs} = \exec{\Pi}{G}{\bs{I}}$ or $\exec{\Pi}{G}{\bs{I}, \bs{R}}$.
	Define $\filt{\hist{\execs}{e}} \in \{0,1\}^{\omega}$ for $e = (u,v)$ by 
	\[
		\filt{\hist{\execs}{e}} = ( \filt{\hist{\execs}{u \to v}}, \filt{\hist{\execs}{v \to u}} )
	\]
	and
	\[
		\filt{\hist{\execs}{u \to v}} = (\bs{1}\{m_1 \neq \lambda \}, \bs{1}\{m_2 \neq \lambda \}, \dots)
	\] 
	with $\hist{\execs}{u \to v} = (m_1, m_2, \dots)$.
	Let $\bs{O} = ( \filt{\hist{\bs{\execs}}{e}} : e \in F )$.
	The \emph{anonymity leakage with binary filter} is defined by
	\[
		\leac{\Pi}{G,T}{F} = \minfo{\bs{I}}{\bs{O}}.
	\]
\end{defn}
\begin{lemma} \label{lemma:card}
	Use the same symbols as in Definition \ref{defn:leac}. Define
	\[
		\bs{O^*} = ( \card{\hist{\exec{\Pi}{G}{\bs{I}, \bs{R}}}{e}} : e \in F ).
	\]
	Then we have $\leac{\Pi}{G,T}{F} \geq \minfo{\bs{I}}{\bs{O^*}}$, or $\entp{\bs{O^*}}$ when $\Pi$ is deterministic.
\end{lemma}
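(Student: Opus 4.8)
The plan is to observe that $\bs{O^*}$ is a deterministic, input-independent function of $\bs{O}$, and then to argue that passing an observation through such a function can only decrease its mutual information with $\bs{I}$. Concretely, for an edge $e = (u,v) \in F$ the count $\card{\hist{\bs{\execs}}{e}}$ equals the total number of $1$'s appearing in the two binary components of $\filt{\hist{\bs{\execs}}{e}}$: if $\hist{\bs{\execs}}{u \to v} = (m_1, m_2, \dots)$ then $\filt{\hist{\bs{\execs}}{u \to v}} = (\bs{1}\{m_1 \neq \lambda\}, \bs{1}\{m_2 \neq \lambda\}, \dots)$ and $\card{\hist{\bs{\execs}}{u \to v}} = \sum_{i \in \mathbb{Z}^+} \bs{1}\{m_i \neq \lambda\}$, and similarly for the reverse direction. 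So there is a fixed map $g$ --- ``count the ones on each coordinate'' --- with $\bs{O^*} = g(\bs{O})$, and $g$ depends on neither $\bs{I}$ nor $\bs{R}$.

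First I would record this functional dependence. It then gives $\minfo{\bs{I}}{\bs{O}} = \minfo{\bs{I}}{\bs{O}, \bs{O^*}} \geq \minfo{\bs{I}}{\bs{O^*}}$: the equality holds because appending the deterministic function $\bs{O^*} = g(\bs{O})$ to the observation changes nothing, exactly the kind of step used in the proof of Theorem \ref{thm:targetfree}; the inequality is monotonicity of mutual information under discarding a component. Since $\leac{\Pi}{G,T}{F} = \minfo{\bs{I}}{\bs{O}}$ by Definition \ref{defn:leac}, this proves the first claimed inequality.

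For the second assertion, suppose $\Pi$ is deterministic. Then $\bs{I}$ alone determines the execution $\exec{\Pi}{G}{\bs{I}}$ --- equivalently $\exec{\Pi}{G}{\bs{I},\bs{R}}$, which ignores $\bs{R}$ --- and hence determines $\bs{O^*}$, so $\entp{\bs{O^*} | \bs{I}} = 0$ and $\minfo{\bs{I}}{\bs{O^*}} = \entp{\bs{O^*}} - \entp{\bs{O^*} | \bs{I}} = \entp{\bs{O^*}}$, just as in the remark following Definition \ref{defn:leak}. Combining with the previous paragraph yields $\leac{\Pi}{G,T}{F} \geq \entp{\bs{O^*}}$. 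The only point that warrants any care --- and it is a very mild one --- is checking that the filter-to-count map really is independent of $\bs{I}$ and $\bs{R}$, which it plainly is; once that is granted there is no genuine obstacle and no computation to grind through.
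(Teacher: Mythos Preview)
Your proposal is correct and follows essentially the same route as the paper: observe that $\bs{O^*}$ is a deterministic function of $\bs{O}$, apply $\minfo{\bs{I}}{\bs{O}} = \minfo{\bs{I}}{\bs{O}, \bs{O^*}} \geq \minfo{\bs{I}}{\bs{O^*}}$, and in the deterministic case note that $\bs{I}$ determines $\bs{O^*}$ so $\minfo{\bs{I}}{\bs{O^*}} = \entp{\bs{O^*}}$. The only difference is that you spell out explicitly why the filter determines the count, whereas the paper dismisses this with ``obviously''.
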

\begin{proof}
	Obviously $\bs{O}$ determines $\bs{O^*}$. Hence
	\[
		\leac{\Pi}{G,T}{F} = \minfo{\bs{I}}{\bs{O}} = \minfo{\bs{I}}{\bs{O}, \bs{O^*}} \geq \minfo{\bs{I}}{\bs{O^*}}.
	\]
	If $\Pi$ is deterministic, $\bs{I}$ determines $\bs{O^*}$, thus $\minfo{\bs{I}}{\bs{O^*}} = \entp{\bs{O^*}}$.
\end{proof}
\begin{lemma} \label{lemma:histc}
		Let $\hists$ and $\hists'$ be histories. $\hists \neq \hists'$ implies $\card{\hists} \neq \card{\hists'}$ or $\card{\hists}, \card{\hists'} > 0$.
\end{lemma}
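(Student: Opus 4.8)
The plan is to prove the statement by contraposition: I will assume that $\card{\hists}=\card{\hists'}$ and that the alternative ``$\card{\hists},\card{\hists'}>0$'' fails, and derive $\hists=\hists'$. First I would observe that, because the two cardinalities are equal, the failure of ``$\card{\hists},\card{\hists'}>0$'' can only mean that both of them are $0$: if one were positive, so would be the other. So the whole lemma reduces to the claim that a history of cardinality $0$ is uniquely determined.

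The key point is that $\card{\hists}$, by the definition used for $\wcoms$ in Definition \ref{defn:cc}, is precisely the number of non-empty entries occurring in the two one-directional sequences that make up $\hists$, and that the empty message $\lambda$ is a single distinguished element of $\mathcal{M}$. Hence $\card{\hists}=0$ forces every entry, in both directions, to equal $\lambda$; that is, $\hists$ must be the all-empty history $((\lambda,\lambda,\dots),(\lambda,\lambda,\dots))$. Applying the same reasoning to $\hists'$ yields the same description, so $\hists=\hists'$, which is exactly what contraposition requires.

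I do not expect any real obstacle here; the only things to be careful about are the logical shape of the inclusive ``or'' in the statement — its failure, combined with $\card{\hists}=\card{\hists'}$, pins both cardinalities to $0$ rather than leaving a mixed case — and the fact that $\lambda$ is a single symbol, which is what makes a sequence containing no non-empty message unique.
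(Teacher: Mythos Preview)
Your contrapositive argument is correct and is exactly the unpacking of what the paper intends: the paper's own proof is the single word ``Trivial.'' You have correctly identified that the negation of the disjunction together with $\card{\hists}=\card{\hists'}$ forces both cardinalities to be $0$, and that $\card{\hists}=0$ pins $\hists$ down to the all-$\lambda$ history.
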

\begin{proof}
	Trivial.
\end{proof}
\begin{defn}[Active Edges]
	For a certain execution $\execs$ on underlying graph $G = (V, E)$, its active edges is a set defined as the collection of those edges by which at least one (non-empty) message is transmitted under $\execs$.
	\[
		\acte{\execs} = \{ e \in E : \card{\hist{\execs}{e}} > 0 \}
	\]
\end{defn}
Let $G = (V, E)$ be a graph.
To obtain a result like Theorem \ref{thm:sparse}, for $E_0 \subseteq E$ we define $\disc{G}{u,v}{E_0}$ as follows.
\[
	\disc{G}{u,v}{E_0} = \inf_{F \subseteq E}\{ |F - E_0| : F \text{ connects } u \text{ to } v \}
\]
\begin{thm} \label{thm:sparsec}
	Let $\Pi$ be a deterministic OR algorithm on underlying graph $G = (V, E)$ with target set $T$.
	Denote $n = |V|$ and $m = |E|$.
	Suppose $\actes_v = \acte{\exec{\Pi}{G}{I_v}}$ for each $v \in V$.
	Then, the following inequality on anonymity leakage holds for arbitrary $k \in \mathbb{N}$.
	\[
		\frac{1}{m^k} \sum_{e_1,\dots,e_k \in E} \leac{\Pi}{G,T}{\{e_1, \dots, e_k\}} \geq - \log \left( \frac{1}{n^2} \sum_{u, v \in V} \left( 1 - \frac{\disc{G}{u,v}{\actes_u \cap \actes_v}}{m} \right)^k \right)
	\]
	Alternatively, for arbitrary $p \in [0,1]$ we have
	\[
		\expe{\leac{\Pi}{G,T}{\bs{F}}} \geq - \log \left( \frac{1}{n^2} \sum_{u, v \in V} \left( 1 - p \right)^{\disc{G}{u,v}{\actes_u \cap \actes_v}} \right)
	\]
	where distribution of $\bs{F}$ is given by that each edge $e \in E$ have possibility $p$ to be included in $\bs{F}$ independently.
\end{thm}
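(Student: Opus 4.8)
The plan is to follow the proof of Theorem~\ref{thm:sparse} almost verbatim, replacing each raw history by its binary filter and replacing the graph distance by $\disc{G}{u,v}{\actes_u \cap \actes_v}$. Write $\filt{\hist{u}{e}} = \filt{\hist{\exec{\Pi}{G}{I_u}}{e}}$, and for $u, v \in V$ put
\[
	F_{u,v} = \{ e \in E : \filt{\hist{u}{e}} \neq \filt{\hist{v}{e}} \}.
\]
Since $\Pi$ is deterministic, the filtered observation $\bs{O} = ( \filt{\hist{\bs\execs}{e}} : e \in F)$ is a function of $\bs{I}$, so exactly as after Definition~\ref{defn:leak} we have $\leac{\Pi}{G,T}{F} = \entp{\bs{O}} \geq \renyi{2}{\bs{O}}$, and with $\bs{I}$ uniform on $\{I_v : v \in V\}$ the collision probability $\prob{\bs{O} = \bs{O}'}$ factors over the coordinates of $F$ just as in Theorem~\ref{thm:sparse}. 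Hence the single new ingredient required is the binary-filter analogue of Lemma~\ref{lemma:path}: the bound $\card{F_{u,v}} \geq \disc{G}{u,v}{\actes_u \cap \actes_v}$ for all $u, v \in V$.

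To establish it, apply Lemma~\ref{lemma:path} to the raw histories, obtaining that $F^{0}_{u,v} = \{ e \in E : \hist{u}{e} \neq \hist{v}{e}\}$ connects $u$ to $v$ in $G$. I claim $F^{0}_{u,v} \subseteq F_{u,v} \cup (\actes_u \cap \actes_v)$. Indeed, let $e \in F^{0}_{u,v}$ with $e \notin F_{u,v}$, so $\hist{u}{e} \neq \hist{v}{e}$ while $\filt{\hist{u}{e}} = \filt{\hist{v}{e}}$; equal filters have equal numbers of non-empty entries, i.e. $\card{\hist{u}{e}} = \card{\hist{v}{e}}$, so by Lemma~\ref{lemma:histc} we must be in the other alternative, $\card{\hist{u}{e}}, \card{\hist{v}{e}} > 0$, which says precisely that $e$ carries a non-empty message in both $\exec{\Pi}{G}{I_u}$ and $\exec{\Pi}{G}{I_v}$, i.e. $e \in \actes_u \cap \actes_v$. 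Consequently $F_{u,v} \cup (\actes_u \cap \actes_v)$ contains $F^{0}_{u,v}$ and so connects $u$ to $v$, and since $\card{(F_{u,v} \cup (\actes_u \cap \actes_v)) - (\actes_u \cap \actes_v)} = \card{F_{u,v} - (\actes_u \cap \actes_v)} \leq \card{F_{u,v}}$, the definition of $\discs$ gives $\disc{G}{u,v}{\actes_u \cap \actes_v} \leq \card{F_{u,v}}$.

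With this bound in hand, both displayed inequalities follow by copying the two chains of estimates in the proof of Theorem~\ref{thm:sparse}, with $\hist{u}{e}$ replaced throughout by $\filt{\hist{u}{e}}$, with $\leaks$ replaced by $\leacs$, and with the step that invoked Lemma~\ref{lemma:path} replaced by
\[
	\sum_{e \in E} \bs{1}\{ \filt{\hist{u}{e}} \neq \filt{\hist{v}{e}} \} = \card{F_{u,v}} \geq \disc{G}{u,v}{\actes_u \cap \actes_v};
\]
the remaining moves (collision entropy, then Jensen's inequality for the convex function $-\log$, then the rearrangement of the sums, and the $\bs{F}$-averaging for the second inequality) are unchanged. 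The only step that needs genuine thought is the inclusion $F^{0}_{u,v} \subseteq F_{u,v} \cup (\actes_u \cap \actes_v)$ — concretely, using Lemma~\ref{lemma:histc} to see that an edge on which the raw histories differ but their filters agree must be active in both $\exec{\Pi}{G}{I_u}$ and $\exec{\Pi}{G}{I_v}$; once that is settled, everything reduces to the computation already carried out for Theorem~\ref{thm:sparse}.
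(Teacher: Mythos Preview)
Your proof is correct and follows essentially the same route as the paper's. The only cosmetic difference is that the paper first invokes Lemma~\ref{lemma:card} to pass from the filtered observation $\bs{O}$ to the coarser cardinality vector $\bs{O^*}$ and then bounds $\card{\{e:\card{\hist{u}{e}}\neq\card{\hist{v}{e}}\}}$, whereas you work directly with $\bs{O}$ and bound $\card{F_{u,v}}=\card{\{e:\filt{\hist{u}{e}}\neq\filt{\hist{v}{e}}\}}$; the key inclusion (via Lemma~\ref{lemma:histc}) and the comparison with Lemma~\ref{lemma:path} and the definition of $\discs$ are identical in both arguments.
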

\begin{proof}
	By Lemma \ref{lemma:card}, $\leac{\Pi}{G,T}{F} \geq \entp{\bs{O^*}} \geq \renyi{2}{\bs{O^*}}$.
	Denote $\hist{v}{e} = \hist{\exec{\Pi}{G}{I_v}}{e}$.
	We only need to show that
	\[
		\card{ \{ e \in E: \card{\hist{u}{e}} \neq \card{\hist{v}{e}} \} } \geq \disc{G}{u,v}{\actes_u \cap \actes_v}
	\]
	because the remaining work is almost the same as in proving Theorem \ref{thm:sparse}.
	By Lemma \ref{lemma:histc} we have
	\begin{eqnarray*}
		\{ e \in E: \card{\hist{u}{e}} \neq \card{\hist{v}{e}} \} &\supseteq& \{ e \in E: \hist{u}{e} \neq \hist{v}{e} \} - \{ e \in E: \card{\hist{u}{e}}, \card{\hist{v}{e}} > 0 \} \\
		&=& \{ e \in E: \hist{u}{e} \neq \hist{v}{e} \} - (\actes_u \cap \actes_v)
	\end{eqnarray*}
	Comparing this with Lemma \ref{lemma:path} and definition of $\discs$ we reach the goal.
\end{proof}
\begin{coro} \label{coro:sparsec}
	Let $\Pi$ be a deterministic OR algorithm on underlying graph $G = (V, E)$ with target set $T$.
	Denote $n = |V|$ and $m = |E|$ and suppose $\wcoms = \wcom{G}{\Pi}$.
	Let $G[U]$ denote the subgraph induced in $G$ by $U$ and $d_G(U,v) = \inf_{u \in U} \{ d_G(u, v)\}$.
	Then, the following inequality on anonymity leakage holds for arbitrary $k \in \mathbb{N}$.
	\[
		\frac{1}{m^k} \sum_{e_1,\dots,e_k \in E} \leac{\Pi}{G,T}{\{e_1, \dots, e_k\}} \geq - \log \left( \frac{1}{n^2} \sum_{u \in V} \inf_{\substack{u \in U \subseteq V \\ |U| \leq \wcoms + 1 \\ G[U] \mathrm{\; connected}}} \left\{ \sum_{v \in V} \left( 1 - \frac{d_G(U,v)}{m} \right)^k \right\} \right)
	\]
	Alternatively, for arbitrary $p \in [0,1]$ we have
	\[
		\expe{\leac{\Pi}{G,T}{\bs{F}}} \geq - \log \left( \frac{1}{n^2} \sum_{u \in V} \inf_{\substack{u \in U \subseteq V \\ |U| \leq \wcoms + 1 \\ G[U] \mathrm{\; connected}}} \left\{ \sum_{v \in V} \left( 1 - p \right)^{d_G(U,v)} \right\} \right)
	\]
	where distribution of $\bs{F}$ is given by that each edge $e \in E$ have possibility $p$ to be included in $\bs{F}$ independently.
\end{coro}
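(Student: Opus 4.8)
The plan is to deduce Corollary~\ref{coro:sparsec} from Theorem~\ref{thm:sparsec} by bounding $\disc{G}{u,v}{\actes_u \cap \actes_v}$ from above in terms of the quantity appearing in the corollary, and then applying monotonicity of $x \mapsto (1-x/m)^k$ (resp. $x \mapsto (1-p)^x$). The starting observation is that the worst-case communication complexity $\wcoms = \wcom{G}{\Pi}$ controls the size of each set of active edges: under any single execution $\exec{\Pi}{G}{I_v}$ at most $\wcoms$ non-empty messages are sent, so $|\actes_v| \leq \wcoms$ for every $v \in V$, and in particular $|\actes_u \cap \actes_v| \leq \wcoms$.

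Next I would argue that $\disc{G}{u,v}{E_0} \leq d_G(U,v)$ whenever $u \in U$, $G[U]$ is connected, and $E(G[U]) \subseteq E_0$, where $E(G[U])$ is the edge set of the induced subgraph. Indeed, take a shortest path $P$ in $G$ from some vertex of $U$ to $v$ realizing $d_G(U,v)$; then $E(G[U]) \cup E(P)$ is an edge set connecting $u$ to $v$ (via $U$, which is connected, and then along $P$), and the number of its edges outside $E_0 \supseteq E(G[U])$ is at most $|E(P)| = d_G(U,v)$. Hence $\disc{G}{u,v}{E_0} \leq d_G(U,v)$. Applying this with $E_0 = \actes_u \cap \actes_v$: any connected $U \ni u$ with $E(G[U]) \subseteq \actes_u \cap \actes_v$ gives $\disc{G}{u,v}{\actes_u \cap \actes_v} \leq d_G(U,v)$. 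The natural choice is the active subgraph of $u$ itself — let $U$ be the connected component of $u$ in the graph $(V, \actes_u)$; then $E(G[U]) \supseteq \actes_u \cap E(G[U])$... actually one needs $E(G[U]) \subseteq \actes_u$, which may fail, so instead take $U = \{u\} \cup \{$endpoints of edges in $\actes_u$ reachable from $u$ within $\actes_u\}$ and note the connecting edge set $\actes_u$ (restricted to that component) lies in $\actes_u \cap \actes_v$ only after intersecting — here I must be careful and instead connect $u$ to $v$ directly inside $\actes_u$: since $\Pi$ is an OR algorithm, under $\exec{\Pi}{G}{I_u}$ information flows from $u$ to the target set, and more relevantly $\actes_u$ together with a shortest path from its reach to $v$ still connects. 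The clean statement is: $\disc{G}{u,v}{\actes_u \cap \actes_v} \leq d_G(U,v)$ for the component $U$ of $u$ in $(V,\actes_u)$, with $|U| \leq \wcoms + 1$ since a connected subgraph on $|U|$ vertices needs $\geq |U|-1$ edges and all of them lie in $\actes_u$.

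With this bound in hand, for each $u \in V$ one takes the infimum over all admissible $U$ (all connected $U \ni u$ with $|U| \leq \wcoms+1$, which includes the specific component just described as a witness), giving
\[
	\disc{G}{u,v}{\actes_u \cap \actes_v} \geq d_G(U^*,v) \geq \inf_{\substack{u \in U \subseteq V \\ |U| \leq \wcoms + 1 \\ G[U] \text{ connected}}} d_G(U,v).
\]
Wait — the inequality direction is the issue: Theorem~\ref{thm:sparsec} has $\disc{}{}{}$ in the exponent base-position where \emph{smaller} exponent gives a \emph{larger} summand, so to get a valid lower bound on the leakage we need an \emph{upper} bound on $\disc{}{}{}$, which is exactly $\disc{G}{u,v}{\actes_u \cap \actes_v} \leq d_G(U,v)$ for the witness $U$, hence $\leq \inf_U d_G(U,v)$ is false — rather $\disc{}{}{} \leq d_G(U^*,v)$ for one particular $U^*$, and since $U^*$ is admissible, $d_G(U^*,v) \geq \inf_U d_G(U,v)$, giving $\disc{}{}{} \leq d_G(U^*,v)$ but not $\leq \inf$. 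The resolution: the corollary's right-hand side uses $\inf_U \sum_v (\cdots)$, i.e.\ the infimum is \emph{outside} the sum over $v$; so I would not bound term-by-term but rather, for each fixed $u$, choose $U = U^*$ (the active component of $u$) uniformly for all $v$, obtaining $\sum_v (1 - \disc{G}{u,v}{\actes_u\cap\actes_v}/m)^k \leq \sum_v (1 - d_G(U^*,v)/m)^k$, and since $U^*$ ranges over admissible sets this last sum is $\geq \inf_{U} \sum_v (1 - d_G(U,v)/m)^k$. Substituting into Theorem~\ref{thm:sparsec} and using monotonicity of $-\log$ finishes both inequalities; the $p$-version is identical with $(1-p)^{(\cdot)}$ in place of $(1-(\cdot)/m)^k$.

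The main obstacle is precisely pinning down the correct witness set $U$ and verifying $E(G[U]) \subseteq \actes_u \cap \actes_v$ is \emph{not} what is needed — what is needed is an edge set $F$ connecting $u$ to $v$ with $|F \setminus (\actes_u \cap \actes_v)|$ small, and the active edges of $u$ contribute for free only insofar as they are \emph{also} active for $v$. The safe route, which I would take, is: $\actes_u \cup \{$shortest-path edges from $u$'s active component to $v\}$ connects $u$ to $v$; its edges outside $\actes_u \cap \actes_v$ are (edges of $\actes_u \setminus \actes_v$) plus (the path edges); but $\actes_u \setminus \actes_v$ can be large, so instead I localize: let $U$ be $u$'s component in $(V, \actes_u)$ and take $F = E(G[U]) \cup P$ where $P$ is a geodesic from $U$ to $v$ — then every edge of $E(G[U])$ is an edge of $G$ both of whose endpoints received a message under $\exec{\Pi}{G}{I_u}$; whether such an edge is in $\actes_v$ need not hold, so $|F \setminus (\actes_u \cap \actes_v)|$ could again include all of $E(G[U])$. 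This shows the bound $\disc{}{}{} \le d_G(U,v)$ as I claimed only uses $\disc{}{}{E_0}$ with $E_0 = E(G[U])$, i.e.\ really one should define the corollary's $U$ to satisfy $E(G[U]) \subseteq \actes_u \cap \actes_v$; checking that $u$'s active component does satisfy the weaker-but-sufficient property, namely that there is \emph{some} connected $U \ni u$ of size $\le \wcoms+1$ with $G[U]$ using only edges in $\actes_u \cap \actes_v$ realizing $d_G(U,v) = \disc{G}{u,v}{\actes_u\cap\actes_v}$, is the delicate combinatorial point and is where I would spend the most care.
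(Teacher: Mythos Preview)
The paper itself gives \emph{no proof} of Corollary~\ref{coro:sparsec}; the proof environment is commented out and contains only a \texttt{TODO}. So there is nothing to compare your approach against on the paper's side.

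That said, your proposal does not complete the argument, and the obstacle you correctly sense is real. The natural route---bound $\disc{G}{u,v}{\actes_u\cap\actes_v}$ (from Theorem~\ref{thm:sparsec}) by $d_G(U,v)$ and then pass to the infimum over admissible $U$---requires, for each fixed $u$, the inequality $|D_{u,v}|\ge d_G(U,v)$ to hold for \emph{every} admissible $U$ (so in particular for $U=\{u\}$, which forces $|D_{u,v}|\ge d_G(u,v)$). But this last inequality is false in general. Concretely, take $G$ the path $u$--$w$--$v$ with target $T=\{w\}$ and the algorithm in which each leaf always sends its bit to $w$ in round~1 regardless of value. Then $\wcoms=2$, yet $\delta(\hists^{\cdot}_{e})$ is the same under every $I_x$, so $\leac{\Pi}{G,T}{F}=0$ for all $F$, while the right-hand side of the corollary with $U=\{u\}$ is strictly positive (e.g.\ $-\log(5/9)$ at $k=1$). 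This not only breaks your term-by-term bound; it shows the stated corollary with $\inf$ cannot hold as written. The trade-off interpretation the paper advertises (larger $\wcoms$ should \emph{weaken} the bound) also points the other way: enlarging the family of admissible $U$ should \emph{increase} the inner sum, i.e.\ one expects a $\sup$ over $U$, not an $\inf$.

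Your instinct to pick a single witness $U^\ast$ (the component of $u$ in $(V,\actes_u)$, which is indeed admissible since $|U^\ast|\le|\actes_u|+1\le\wcoms+1$) is the right move for a $\sup$-version of the statement, but even there the step $\disc{G}{u,v}{\actes_u\cap\actes_v}\ge d_G(U^\ast,v)$ needs more than you supply: edges of $\actes_u$ lying in other components can shorten $\disc{G}{u,v}{\actes_u}$ below $d_G(U^\ast,v)$, so you would have to argue via $\actes_u\cap\actes_v$ rather than $\actes_u$ alone, or choose $U^\ast$ differently. In short: you have located the crux, but the statement as printed appears to be mis-stated, and your proposal (like the paper) leaves the actual combinatorial verification open.
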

Suppose $\Pi$ is a deterministic OR algorithm on $G$ with $T$.
Like what we do in Section \ref{subsec:deterministic}, we define an equivalence relation $\sim_F^*$ on $V$ by
\[
	u \sim^*_F v \text{ iff } {O^*}^u_F = {O^*}^v_F
\]
where ${O^*}^v_F$ stands for $( \card{\hist{\exec{\Pi}{G}{I_v}}{e}} : e \in F )$.
Suppose there are $k$ equivalence classes $V_1, \dots, V_k$ under $\sim^*_F$.
Denote $n = |V|$ and $n_i = |V_i|$ and suppose $\wcoms = \wcom{G}{\Pi} < n - 1$.
With definition
\[
	\chi^*(F) = \sum_{i=1}^k \min \{ \wcoms + 1, n_i \}
\]
we have the following lemma.
\begin{lemma} \label{lemma:densec}
	Let $F$ be an arbitrary subset of $E$.
	\begin{itemize}
		\item[a)] There are at least $n - \chi^*(F)$ different $e \in E - F$ such that $\chi^*(F \cup \{e\}) \geq \chi^*(F) + 1$.
		\item[b)] $\displaystyle \leac{\Pi}{G,T}{F} \geq \frac{\chi^*(F)-(\wcoms + 1)}{n-(\wcoms + 1)} \log \frac{n}{\wcoms + 1}$.
	\end{itemize}
\end{lemma}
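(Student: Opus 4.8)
The plan is to mirror the proof of Lemma \ref{lemma:dense} as closely as possible, since Lemma \ref{lemma:densec} is the exact analogue for the ``binary filter'' threat model, with $c(F)$ replaced by the capped counting function $\chi^*(F)$. The extra wrinkle is that active edges of the two executions $\exec{\Pi}{G}{I_u}$ and $\exec{\Pi}{G}{I_v}$ can have equal (positive) message counts without the histories being equal, so the relevant ``distinguishing edge set'' between $u$ and $v$ is not all of $\{e : \hist{u}{e} \neq \hist{v}{e}\}$ but only those $e$ where the \emph{counts} differ. I would lean on Theorem \ref{thm:sparsec}'s intermediate claim: the set of edges on which the counts of $I_u$ and $I_v$ differ contains $\{e : \hist{u}{e} \neq \hist{v}{e}\} - (\actes_u \cap \actes_v)$, which still connects $u$ to $v$ \emph{outside} $\actes_u \cap \actes_v$. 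But for part a) I only need that this count-distinguishing set is nonempty whenever $u \sim^*_F v$ but a new edge $e$ splits them, i.e. whenever $u$ and $v$ are genuinely count-distinguishable at all.

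For part a), I would take the union over all count-equivalent pairs $u \sim^*_F v$ of the edge sets $\{e \in E : \card{\hist{u}{e}} \neq \card{\hist{v}{e}}\}$, and argue this equals $\{e \in E-F : \exists u \sim^*_F v$ with $u \nsim^*_{F\cup\{e\}} v\}$, exactly as in Lemma \ref{lemma:dense}. The right-hand description shows every such $e$ is a new edge that refines the partition. For the lower bound on the \emph{size} of this union, I cannot simply invoke spanning trees of the equivalence classes as in Lemma \ref{lemma:dense}, because now the relevant graph is $(V, F')$ with $F' = \{e : \card{\hist{u}{e}} \neq \card{\hist{v}{e}}\}$, and by Theorem \ref{thm:sparsec}'s inequality this graph connects $u$ to $v$ using edges \emph{avoiding} $\actes_u \cap \actes_v$ --- in particular it still makes each old equivalence class $V_i$ connected. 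The key combinatorial point is that a class $V_i$ of size $n_i$ requires at least $\min\{\wcoms+1, n_i\} - 1$ edges to be ``spanned'' in a way that can only refine it within the cap: intuitively, once you have committed $\wcoms+1$ distinct count-profiles you are saturated, and within a single class a spanning tree on $n_i$ vertices has $n_i - 1$ edges, while the cap allows us to only count up to $\wcoms$ of them. Summing $\min\{\wcoms+1,n_i\}-1$ over the $k$ classes gives $\chi^*(F) - k$; combined with the standard ``new edges refine'' argument this should yield the bound $n - \chi^*(F)$, where the role of $n$ comes from a global spanning-forest count of all of $V$.

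For part b), with $\sim^*_F$ splitting $V$ into classes of sizes $n_1,\dots,n_k$, we have $\leac{\Pi}{G,T}{F} = \entp{\bs{O^*}} = -\sum_i \frac{n_i}{n}\log\frac{n_i}{n}$ exactly as in Lemma \ref{lemma:dense} b). Then I would invoke the same auxiliary inequality (the ``Theorem \ref{thm:petrov}'' referenced in Lemma \ref{lemma:dense}) but now with the cap parameter $m = \wcoms + 1$ rather than $m = 1$: this is presumably a statement that $-\sum_i \frac{n_i}{n}\log\frac{n_i}{n} \geq \frac{\sum_i \min\{m,n_i\} - m}{n - m}\log\frac{n}{m}$, i.e. the entropy is bounded below in terms of how many ``$m$-saturated-or-smaller'' contributions the partition has. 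Substituting $m = \wcoms+1$ and $\sum_i \min\{\wcoms+1,n_i\} = \chi^*(F)$ gives exactly $\leac{\Pi}{G,T}{F} \geq \frac{\chi^*(F)-(\wcoms+1)}{n-(\wcoms+1)}\log\frac{n}{\wcoms+1}$, which requires the hypothesis $\wcoms < n-1$ for the denominator to be positive.

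The main obstacle I anticipate is part a)'s quantitative bound: the clean spanning-forest argument of Lemma \ref{lemma:dense} used that $\{e : \hist{u}{e}\neq\hist{v}{e}\}$ connects \emph{every} equivalent pair, whereas here I have connectivity only through count-distinguishing edges that must dodge $\actes_u \cap \actes_v$, and I must additionally show that the cap $\wcoms+1$ is respected --- essentially that one cannot produce more than $\wcoms$ refinements inside a single class, because a process's history on its active edges is bounded in total message count by $\wcoms$. Making that ``cap'' rigorous --- tying the combinatorial bound $\min\{\wcoms+1,n_i\}$ to the worst-case communication complexity via Definition \ref{defn:cc} and Lemma \ref{lemma:histc} --- is the delicate step; the rest is bookkeeping that parallels Section \ref{subsec:deterministic} verbatim.
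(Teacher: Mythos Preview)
Your treatment of part b) is essentially what the paper does: use Lemma \ref{lemma:card} to get $\leac{\Pi}{G,T}{F}\ge \entp{\bs{O^*}}=-\sum_i\frac{n_i}{n}\log\frac{n_i}{n}$ (note this is $\ge$, not $=$, since $\bs{O^*}$ is coarser than $\bs{O}$), then plug $m=\wcoms+1$ into Theorem \ref{thm:petrov}.

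Part a) is where you have a genuine gap. Your plan is to take the union over \emph{all} pairs $u\sim^*_F v$ of the sets $\{e:\card{\hist{u}{e}}\neq\card{\hist{v}{e}}\}$, then control its size via spanning forests while ``dodging $\actes_u\cap\actes_v$.'' This does not lead anywhere clean: the intersection $\actes_u\cap\actes_v$ varies with the pair, so you cannot subtract a single set of size $\le\wcoms$ from a single spanning tree. Your attempted arithmetic (``$\chi^*(F)-k$ combined with a global spanning-forest count'') does not produce $n-\chi^*(F)$.

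The paper's key move is the one you are missing: work class by class, and in each large class $V_i$ (i.e.\ $n_i>\wcoms+1$) fix a \emph{single} anchor $u\in V_i$. Then
\[
\{e:\exists v\sim^*_F u,\ u\nsim^*_{F\cup\{e\}}v\}\;\supseteq\;\Bigl(\bigcup_{v\sim^*_F u}\{e:\hist{u}{e}\neq\hist{v}{e}\}\Bigr)\setminus\actes_u,
\]
using Lemma \ref{lemma:histc}. By Lemma \ref{lemma:path} the union connects all of $V_i$, hence has at least $n_i-1$ edges; and $\card{\actes_u}\le\wcoms$ since it is the active-edge set of \emph{one} execution. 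So each large class contributes at least $n_i-\wcoms-1=\max\{0,n_i-(\wcoms+1)\}$ such edges, and summing gives $n-\chi^*(F)$. The paper also checks, via the rewriting $\chi^*(F)=\sum_{v\in V}\min\{(\wcoms+1)/|[v]_{\sim^*_F}|,1\}$, that each such $e$ genuinely increases $\chi^*$ by at least one --- a point your sketch does not address either, and which is not automatic for a capped sum.
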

\begin{proof}
	For any equivalence class $V_i$ with $n_i > \wcoms + 1$ we arbitrarily choose a node $u \in V_i$.
	Denote $\hist{v}{e} = \hist{\exec{\Pi}{G}{I_v}}{e}$.
	By Lemma \ref{lemma:histc} we have
	\begin{eqnarray*}
		& & \{ e \in E : \exists v \sim^*_F u, u \nsim^*_{F \cup \{e\}} v \} \\
		&=& \{ e \in E : \exists v \sim^*_F u,\card{\hist{u}{e}} \neq \card{\hist{v}{e}} \} \\
		&\supseteq& \{ e \in E : \exists v \sim^*_F u,\hist{u}{e} \neq \hist{v}{e} \} - \{ e \in E : \exists v \sim^*_F u,\card{\hist{u}{e}},\card{\hist{v}{e}}>0 \} \\
		&\supseteq& \bigcup_{\substack{v \in V \\ v \sim^*_F u}} \{ e \in E : \hist{u}{e} \neq \hist{v}{e} \} - \acte{\exec{\Pi}{G}{I_u}}
	\end{eqnarray*}
	By Lemma \ref{lemma:path} the first term in the right-hand side has at least $n_i - 1$ edges since it connects $V_i$.
	And certainly $|\acte{\exec{\Pi}{G}{I_u}}| \leq \wcoms$.
	So the left-hand side includes at least $n_i - \wcoms - 1$ edges.
	We show that each edge $e$ in it satisfies $\chi^*(F \cup \{e\}) > \chi^*(F)$, by rewriting
	\[
		\chi^*(F) = \sum_{i=1}^k \min \left\{ \wcoms + 1, n_i \right\} = \sum_{\substack{v \in V}} \min \left\{ \frac{\wcoms + 1}{|[v]_{\sim^*_F}|}, 1 \right\}
	\]
	where $[v]_{\sim}$ stands for the equivalence class under relation $\sim$ including $u$.
	Because each $V_i$ for $i = 1,\dots,n$ gives us at least $\max\{0, n_i - \wcoms - 1\}$ edges we want, summing them up we have at least $n - \chi^*(F)$ edges.
	Therefore a) holds.
	
	For b), by Lemma \ref{lemma:card} we have
	\[
		\leac{\Pi}{G, T}{F} \geq \entp{\bs{O^*}} = - \sum_{i=1}^k \frac{n_i}{n} \log \frac{n_i}{n}.
	\]
	Applying Theorem \ref{thm:petrov} with $m = \wcoms + 1$ we immediately get the result.
\end{proof}
\begin{thm} \label{thm:densec}
	Let $\Pi$ be a deterministic OR algorithm on underlying graph $G = (V, E)$ with target set $T$.
	Denote $n = |V|$ and $m = |E|$ and suppose $\wcoms = \wcom{G}{\Pi} < n - 1$.
	Then, the following inequality on anonymity leakage holds for arbitrary $k \in \mathbb{N}$.
	\[
		\frac{1}{m^k} \sum_{e_1,\dots,e_k \in E} \leac{\Pi}{G,T}{\{e_1, \dots, e_k\}} \geq \left( 1 - \left( 1 - \frac{1}{m} \right)^k \right) \log \frac{n}{\wcoms + 1}
	\]
	Alternatively, for arbitrary $p \in [0,1]$ we have
	\[
		\expe{\leac{\Pi}{G,T}{\bs{F}}} \geq p \log \frac{n}{\wcoms + 1}
	\]
	where distribution of $\bs{F}$ is given by that each edge $e \in E$ have possibility $p$ to be included in $\bs{F}$ independently.
\end{thm}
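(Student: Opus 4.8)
The plan is to run exactly the argument of Theorem \ref{thm:dense}, but with the quantity $c(F)$ replaced by $\chi^*(F)$ and Lemma \ref{lemma:dense} replaced by Lemma \ref{lemma:densec}. Write $\phi(F) = (\chi^*(F) - (\wcoms+1)) / (n - (\wcoms+1))$; this is well defined and nonnegative since $\wcoms < n-1$ and $\wcoms+1 \le \chi^*(F) \le n$, and Lemma \ref{lemma:densec} b) reads $\leac{\Pi}{G,T}{F} \ge \phi(F) \log \frac{n}{\wcoms+1}$. Because $\log \frac{n}{\wcoms+1} \ge 0$, it suffices to prove the two ``counting'' inequalities $\frac{1}{m^k} \sum_{e_1,\dots,e_k \in E} \phi(\{e_1,\dots,e_k\}) \ge 1 - (1 - \frac{1}{m})^k$ and $\sum_{F \subseteq E,\, |F| = k} \phi(F) \ge \binom{m-1}{k-1}$, since summing the latter against the binomial weights $p^{|F|}(1-p)^{|E-F|}$ telescopes to $p$ exactly as in the proof of Theorem \ref{thm:dense}.

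The first thing I would do is extract the one-step estimate from Lemma \ref{lemma:densec} a). Since $\chi^*$ is monotone non-decreasing under adding an edge, and at least $n - \chi^*(F)$ edges $e$ raise it by at least one while all others leave it at least unchanged, one gets $\sum_{e \in E} \chi^*(F \cup \{e\}) \ge m\, \chi^*(F) + (n - \chi^*(F))$, and over $E - F$ (of size $m - k$ when $|F| = k$) the analogous bound $\sum_{e \in E - F} \chi^*(F \cup \{e\}) \ge (m-k-1) \chi^*(F) + n$. These are literally the inequalities used in Theorem \ref{thm:dense} with $c$ renamed to $\chi^*$. A short affine rearrangement shows they translate into $\sum_{e \in E} \phi(F \cup \{e\}) \ge (m-1)\phi(F) + 1$ and $\sum_{e \in E - F} \phi(F \cup \{e\}) \ge (m-k-1)\phi(F) + 1$ respectively; the only computation worth doing by hand is checking that after dividing by $n - (\wcoms+1)$ and pulling out the $(m-1)\phi(F)$ (resp.\ $(m-k-1)\phi(F)$) term, the leftover additive constant is exactly $1$.

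Given these recurrences, the two inductions on $k$ are word-for-word the ones already written out in the proof of Theorem \ref{thm:dense}. The base case $k = 0$ uses $\chi^*(\varnothing) = \min\{\wcoms+1, n\} = \wcoms+1$ (as $\varnothing$ has a single equivalence class of size $n$, and $\wcoms < n-1$), so $\phi(\varnothing) = 0$, which matches $1 - (1 - \frac{1}{m})^0 = 0$ and $\binom{m-1}{-1} = 0$; the inductive steps are identical term by term, merely collecting the factors $\frac{m-1}{m}$ (resp.\ $\frac{m-k-1}{k+1}$) against the hypothesis and adding the $+1$ contribution. Multiplying the resulting bounds by $\log \frac{n}{\wcoms+1}$ and invoking Lemma \ref{lemma:densec} b) gives the two displayed inequalities of the theorem.

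I do not anticipate a genuine obstacle: the real content of the theorem is carried entirely by Lemma \ref{lemma:densec}. The only points requiring care here are (i) confirming $\chi^*(\varnothing) = \wcoms+1$ so that the base case lines up, and (ii) verifying that the affine substitution defining $\phi$ reproduces precisely the recurrence coefficients $(m-1, +1)$ and $(m-k-1, +1)$ that make the existing induction reusable verbatim. If one of these failed to match, the induction constants would have to be re-derived from scratch, but the affine algebra above indicates they do match, so the proof of Theorem \ref{thm:dense} transfers unchanged.
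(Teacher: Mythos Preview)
Your proposal is correct and follows exactly the approach the paper intends: the paper's own proof is the single sentence ``The proof is almost the same as that for Theorem \ref{thm:dense}, except using Lemma \ref{lemma:densec} instead of Lemma \ref{lemma:dense}.'' You have simply unpacked this, correctly checking the base case $\chi^*(\varnothing)=\wcoms+1$, the monotonicity of $\chi^*$, and that the affine change from $\chi^*$ to $\phi$ reproduces the same recurrence constants as in Theorem \ref{thm:dense}.
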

\begin{proof}
	The proof is almost the same as that for Theorem \ref{thm:dense}, except using Lemma \ref{lemma:densec} instead of Lemma \ref{lemma:dense}.
\end{proof}

\subsection{Difficulty in Studying Randomized Case}
In this section, we try to find out the essential difficulty in studying the randomized case when using Corollary \ref{coro:split} as relaxation.
Consider a simplest underlying graph $G = (V, E)$, which includes only two nodes $u,v$ and $n$ edges $e_1, \dots, e_n$ connecting them.
Given a certain randomized OR algorithm $\Pi$ on $G$, we denote $\bs{X_i} = \hist{\exec{\Pi}{G}{\bs{I}, \bs{R}}}{e_i}$ for $i = 1, \dots, n$, where $\bs{I}$ is uniformly distributed on $\{I_u, I_v\}$.
By plugging $F = E$ into Corollary \ref{coro:split} we obtain that $\bs{X_1}, \dots, \bs{X_n}$ determine $\bs{I}$.
If the adversary can observe randomly $k$ of $\bs{X_1}, \dots, \bs{X_n}$, the expectation of anonymity leakage is
\[
	\epsilon = \frac{1}{\binom{n}{k}}\sum_{\substack{F \subseteq E \\ |F| = k}} \leak{\Pi}{G,T}{F} = \frac{1}{\binom{n}{k}} \sum_{\substack{J \subseteq \{1,\dots,n\} \\ |J|=k}} \minfo{\bs{I}}{(\bs{X_i} : i \in J)}
\]
There does exist a trade-off between $\epsilon$ and $\sum_{i=1}^n \entp{\bs{X_i}}$ which clearly bounds the average-case communication compleity $\acom{G}{\Pi, \bs{I}}$.
\begin{eqnarray*}
	& & \frac{1}{n}\sum_{i=1}^n \entp{\bs{X_i}} \\
	&\geq& \frac{1}{(n-k)\binom{n}{k}} \sum_{\substack{J \subseteq \{1,\dots,n\} \\ |J|=k}} \entp{(\bs{X_i}:i \notin J)} \\
	&\geq& \frac{1}{(n-k)\binom{n}{k}}  \sum_{\substack{J \subseteq \{1,\dots,n\} \\ |J|=k}} \minfo{\bs{I}}{(\bs{X_i}:i \notin J) | (\bs{X_i}:i \in J)} \\
	&=& \frac{1}{(n-k)\binom{n}{k}}  \sum_{\substack{J \subseteq \{1,\dots,n\} \\ |J|=k}} 
	\left( \minfo{\bs{I}}{\bs{X_1},\dots,\bs{X_n}} - \minfo{\bs{I}}{(\bs{X_i}:i \in J)} \right) \\
	&=& \frac{1}{n-k}(\entp{\bs{I}}-\epsilon) = \frac{1-\epsilon}{n-k}
\end{eqnarray*}
However we hardly consider it as a meaningful result, since we can only obtain $\sum_{i=1}^n \entp{\bs{X_i}} \geq 2$ with a relatively strong condition $k = n/2$ and $\epsilon = 0$.
On the other hand, we have not found a construction of $\bs{X_1}, \dots, \bs{X_n}$ to approach the bound given above even within a constant factor.
In fact, if $\epsilon = 0$, we arrive at a special case called \emph{$(k,n,n)$-ramp scheme} in the study of \emph{secret sharing scheme}.
Next we will briefly introduce some necessary concepts about secret sharing scheme, which are mostly taken from \cite{bogdanov2016threshold} with some modification.

Let $\mathcal{P}$ be a set of $n$ parties, i.e., $|\mathcal{P}| = n$.
A collection of subsets $\mathcal{A} \subseteq 2^{\mathcal{P}}$ is \emph{monotone} if for every $B \in \mathcal{A}$ and $B \subseteq C$ it holds that $C \in \mathcal{A}$.
The collection is \emph{anti-monotone} if for every $B \in \mathcal{A}$ and $C \subseteq B$ it holds that $C \in \mathcal{A}$.
\begin{defn}
	A \emph{partial access structure} $\mathcal{A} = (\mathcal{S}, \mathcal{R})$ is a pair of non-empty disjoint collections of subsets $\mathcal{R}$ and $\mathcal{S}$ of $2^{\mathcal{P}}$ such that $\mathcal{R}$ is monotone and $\mathcal{S}$ is anti-monotone.
	Subsets in $\mathcal{R}$ are call \emph{qualified} and subsets in $\mathcal{S}$ are called \emph{unqualified}.
\end{defn}
A secret sharing scheme for $\mathcal{A} = (\mathcal{S}, \mathcal{R})$ is a method by which a dealer who has a secret distributes shares to $n$ parties such that any subset in $\mathcal{R}$ can reconstruct the secret from its shares, while any subset in $\mathcal{S}$ can not reveal any information on the secret.
\begin{defn}
	A secret sharing scheme of a secret $\bs{S}$ for a partial access structure $\mathcal{A} = (\mathcal{S}, \mathcal{R})$ over $n$ parties is the collection of random variables $\bs{S_1}, \dots, \bs{S_n}$ with the following properties: 
	\begin{description}
		\item \textbf{Reconstruction.} For every $B \in \mathcal{R}$, $\entp{\bs{S} | (\bs{S_i} : i \in B)} = 0$.
		\item \textbf{Secrecy.} For every $B \in \mathcal{S}$, $\entp{\bs{S} | (\bs{S_i} : i \in B)} = \entp{\bs{S}}$.
	\end{description}
	In particular, we say that a scheme is $(s, r, n)$-ramp scheme if its partial access structure $\mathcal{A} = (\mathcal{S}, \mathcal{R})$ is given by $\mathcal{S} = \{B : |B| \leq s\}$ and $\mathcal{R} = \{B : |B| \geq r\}$, where $s,r,n$ are all integers and $0 \leq s < r \leq n$.
\end{defn}
Therefore, by the definition above, when $\epsilon = 0$, it is easy to see that $\bs{X_1}, \dots, \bs{X_n}$ form a $(k,n,n)$-ramp scheme for secret $\bs{I}$.
Recall the bound we have obtained
\[
	\sum_{i=1}^{n}\entp{\bs{X_i}} \geq \frac{n}{n-k} \entp{\bs{I}} \tag{$\ast$}
\]
We will soon see that this bound can not be improved by using only Shannon-type inequalities, i.e., $\minfo{\bs{X}}{\bs{Y}|\bs{Z}} \geq 0$.

In the context of secret sharing scheme, we call a $(s, r, n)$-ramp scheme \emph{optimal} if it has the property that $\entp{\bs{S_i}} = \entp{\bs{S}}/(r-s)$ for $i = 1, \dots, n$.
In fact, for arbitrary $(s, r, n)$, an optimal $(s, r, n)$-ramp scheme always exists \cite[Theorem 9]{jackson1996combinatorial}, which means $(\ast)$ is tight if we can allow $\bs{I},\bs{X_1}, \dots, \bs{X_n}$ to follow some certain distributions, regardless of restrictions from their definition.
However we can not, since at least it is clear that $\bs{I}$ is a $1$-bit binary secret.
To our best knowledge, there is still no non-trivial result for this case to some extent.
A most relevant result for $1$-bit binary secret is that recently, Bogdanov et al. \cite{bogdanov2016threshold} and Cascudo et al. \cite{cascudo2013bounds} prove that for $1 \leq s < r < n$, any $(s, r, n)$-ramp secret sharing scheme must have \emph{share size} at least
\[
	\max \left\{ \log \frac{n-s+1}{r-s}, \log \frac{r+1}{r-s} \right\}
\]
where share size is defined as $\max_{i=1,\dots,n} \{ \renyi{0}{\bs{S_i}} \}$, in which $\renyi{0}{\bs{X}}$ is the  Hartley entropy, or R\'{e}nyi entropy of order $0$, given by the logarithm of the size of the support of $\bs{X}$.

More generally, consider an OR algorithm $\Pi$ on an arbitrary underlying graph $G = (V, E)$.
Suppose $\bs{I}$ is still uniformly distributed on $\{I_u, I_v\}$ with some chosen $u, v \in V$.
Then, by Corollary \ref{coro:split}, all $\bs{S_e} = \hist{\exec{\Pi}{G}{\bs{I}, \bs{R}}}{e}$ for $e \in E$ form a secret sharing scheme for secret $\bs{I}$ with access structure $\mathcal{A} = (\mathcal{S}, \mathcal{R})$, where $\mathcal{R} \subseteq 2^{E}$ consists of all subsets of $E$ whose removal disconnects $u$ from $v$ in $G$, and $\mathcal{S}$ is not necessarily specified.
An interesting thing is that, in a sense, this access structure looks like a dual structure of the \emph{undirected s-t-connectivity structure} \cite[Section 3.2]{beimel2011secret}.\footnote{Of course, actually they have no mathematical relationship.}

	\section{Discussion}

\subsection{Tightness Issue on Our Bounds}
For the deterministic case, the canonical \emph{convergecast} algorithm \cite{peleg2000distributed} can be modified to be an OR algorithm with a good performance.
To be specific, we can choose an undirected spanning tree $Y$ of underlying graph $G = (V, E)$.
Let $t$ be chosen as a ``root'' of $Y$.
Denote the depth of a node $v$ in tree $Y$ as $\dept{Y}{v} = d_Y(v,t)$.
The parent of a node is the node connected to it on the path to the root $t$.
Then, it is elementary to see that Algorithm \ref{algo:convergecast} is an OR algorithm on $G$ with target set $T = \{t\}$, where $n = |V|$.
\begin{algorithm}[H] \label{algo:convergecast}
	\caption{Convergecast}
	\begin{algorithmic}[1]
		\item // code for node $v$ with input $i = I(v) \in \{0,1\}$ 
		\State {\bf initially} $x \leftarrow i$
		\Upon {receiving a message $\langle m \rangle$ from any neighbor}
			\State $x \leftarrow x \vee m$
		\EndUpon
		\Upon {it is round $n - \dept{Y}{v}$}
			\If {$v = t$}
				\State \Return $x$
			\Else
				\State send a message  $\langle x \rangle$ to $v$'s parent
				\State \Return
			\EndIf
		\EndUpon
	\end{algorithmic}  
\end{algorithm} 
Let $\Pi_0(Y)$ represent the algorithm.
Because of its simple behavior, the performance can be determined easily as follows.\footnote{We omit the proof since it is easy and not relatively important.}
\[
	\left\{
		\begin{array}{l}
			\displaystyle 
			\leak{\Pi_0(Y)}{G,T}{F} = \log n + \sum_{V_i} \frac{n_i}{n} \log \frac{n_i}{n} \\
			\leac{\Pi_0(Y)}{G,T}{F} = 0 \\
			\wcom{G}{\Pi_0(Y)} = n - 1
		\end{array}
	\right.
\]
where $n = |V|$, $n_i = |V_i|$, and $V_i$ runs through node set of every connected component in graph $(V, E(Y)-F)$, in which $E(Y)$ is determined by $Y = (V, E(Y))$.

By applying $\Pi_0(Y)$ on underlying graph $K_n$, the complete graph of order $n$, where $Y$ consists of one internal node $t$ and $n-1$ leaves, we immediately get that the bound given by Theorem \ref{thm:dense} is asymptotically tight.
Theorem \ref{thm:sparse} is certainly not asymptotically tight for dense underlying graph since in this case the bound it gives is rather weaker than that from Theorem \ref{thm:dense}.
However, it seems like that on many sparse underlying graphs, the anonymity leakage for $\Pi_0(Y)$ is not far away from the bound given by Theorem \ref{thm:sparse}, if we choose an appropriate $Y$.
As for Theorem \ref{thm:sparsec} an Theorem \ref{thm:densec}, we know little about tightness concerns.

For the randomized case, an interesting problem is that, can we design an randomized OR algorithm $\Pi$ on a given underlying graph $G$, which always reachs the bound given by Theorem \ref{coro:rcase} for arbitrary $F$?
If the answer is ``yes'', the bound is absolutely tight in every sense.

\subsection{Other Issues} \label{subsec:otherissues}

\nosection{Upper Bound of Anonymity Leakage}
In this paper, we give a series of lower bounds of anonymity leakage, while we know little about the upper bounds.
At present, the best upper bound for the deterministic case we know is given by $\Pi_0(Y)$ introduced in the previous section.
Does there exist a better general deterministic OR algorithm, or a better one for some certain underlying graph?
Also, it is meaningful to find a good algorithm for the randomized case, or with threat model defined in Section \ref{subsec:deterministicc}.

\nosection{Limitation of Our Model}
Undoubtedly, the simplicity of our model brings us a lot of advantages, but we lose something at the same time.
As we mentioned at the beginning of this paper, our model is not well compatible with cryptography.
Also, a real-world anonymous communication system may change its topology even during a transmission, e.g., a new edge caused by a query for another node's IP address.
Such a function apparently could not be implemented within our model.

\nosection{Flexibility of Our Model}
Throughout this paper, we only consider an external passive adversary whose ability is to observe some edges.
However, we could get parallel results when the adversary is still passive but internal (i.e., its ability is to observe some processors) with almost the same method we have used.
If we want to study the case an adversary is internal and active, which means that there are some nodes in the control of the adversary, then, it is equivalent to say that the adversary's ability is to hold some Byzantine nodes \cite{lamport1982byzantine} in our model.

	\bibliography{reference}
	
	\appendix
	
	\section{Appendix}
The proof of the following inequality is attributed to Fedor Petrov.
\begin{thm} \label{thm:petrov}
	Let $n, m, n_1, n_2, \dots, n_k \in \mathbb{Z}^+$ such that $m < n = \sum_{i=1}^k n_i$. Then,
	$$
		-\sum_{i=1}^k \frac{n_i}{n} \log \frac{n_i}{n} \geq \frac{\sum_{i=1}^k \min \{n_i, m\} - m}{n - m} \log \frac{n}{m}.
	$$
\end{thm}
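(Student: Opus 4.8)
The plan is to reduce the inequality, by a convexity argument, to a short list of extreme configurations. Set
\[
	F(x_1,\dots,x_k) \;=\; -\sum_{i=1}^{k}\frac{x_i}{n}\log\frac{x_i}{n}\;-\;\frac{\sum_{i=1}^{k}\min\{x_i,m\}-m}{n-m}\,\log\frac{n}{m},
\]
a continuous function (with the convention $0\log 0=0$) on the compact simplex $\Delta=\{x\in\mathbb{R}_{\ge 0}^{k}:\sum_i x_i=n\}$; the theorem is exactly the assertion $F(n_1,\dots,n_k)\ge 0$, so it suffices to prove $F\ge 0$ on all of $\Delta$. The crucial point is that, although the subtracted term is only piecewise linear in $x$, it becomes affine once we fix which coordinates are $\ge m$: if $A=\{i:x_i\ge m\}$, then $\sum_i\min\{x_i,m\}=|A|\,m+\sum_{i\notin A}x_i$. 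Hence every $x\in\Delta$ lies in the polytope $P_A=\{y\in\Delta: y_i\ge m\ (i\in A),\ y_i\le m\ (i\notin A)\}$ with $A=\{i:x_i\ge m\}$, and on $P_A$ the function $F$ equals the (concave) entropy term minus an affine function, hence is concave there. Since a concave function on a polytope attains its minimum at a vertex, it is enough to verify $F\ge 0$ at the vertices of the finitely many polytopes $P_A$.

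At a vertex of $P_A$ all coordinates but one sit at a box face, i.e.\ equal $0$ or $m$; discarding the zero coordinates (which alters neither term of $F$) and permuting, every vertex has the form $(\underbrace{m,\dots,m}_{a},c)$ with $c=n-am\ge 0$ and $a\in\mathbb{Z}_{\ge 0}$. For these I would finish by a two-case calculation. If $c\le m$, then $\sum_i\min\{x_i,m\}=am+c=n$, the subtracted term is $\log\frac nm$, and using $\frac{am}{n}=1-\frac cn$ one gets $F=\frac cn\log\frac mc\ge 0$. If $c\ge m$ (so $m\le c\le n$), then $\sum_i\min\{x_i,m\}=(a+1)m$, and substituting $am=n-c$ the inequality $F\ge 0$ rearranges to $c\log\frac nc\ge\frac{m(n-c)}{n-m}\log\frac nm$; setting $\phi(c)=c\log\frac nc-\frac{m(n-c)}{n-m}\log\frac nm$ on $[m,n]$ we have $\phi(m)=\phi(n)=0$ and $\phi''(c)=-1/(c\ln 2)<0$, so $\phi$ is concave and thus $\ge 0$ throughout $[m,n]$. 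Hence $F\ge 0$ at every vertex, so on all of $\Delta$, which proves the theorem; the degenerate vertices ($c\in\{0,n\}$, and the boundary $c=m$) give $F=0$, locating the equality cases (all $n_i$ equal to $m$, or $k=1$).

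The only genuinely non-routine step is the first one — recognizing that refining $\Delta$ by the hyperplanes $\{x_i=m\}$ turns the ungainly difference of two concave-type quantities into something concave on each cell, so that the whole problem collapses to the finitely many vertices listed above. Everything afterwards — enumerating those vertices and the two-line calculus check in each case — is mechanical. The one remaining nicety is the continuity of $F$ on the boundary of $\Delta$, which is immediate from $t\log t\to 0$ as $t\to 0^+$, so that the minimum over the compact set $\Delta$ is genuinely attained and the vertex reduction is legitimate.
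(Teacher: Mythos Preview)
Your argument is correct: the cell decomposition by the hyperplanes $\{x_i=m\}$ does make $F$ concave on each piece, the vertices of each $P_A$ are exactly of the form you describe (all coordinates but one at a box face), and both case checks are accurate (I verified the rearrangement in the $c\ge m$ case and the concavity endpoint argument for $\phi$). But the route is quite different from the paper's, which is a two-line termwise estimate: with $a=m/n$ and $p_i=n_i/n$, concavity of $H(p)=-p\log p$ together with $H(0)=H(1)=0$ gives the secant bounds $H(p_i)\ge H(a)\,p_i/a$ when $p_i\le a$ and $H(p_i)\ge H(a)\,(1-p_i)/(1-a)$ when $p_i\ge a$; summing these over $i$ yields the right-hand side immediately. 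So both proofs ultimately rest on the concavity of $H$, but the paper applies it pointwise to each coordinate, while you apply it globally after a polytope decomposition and a reduction to extreme points. The paper's approach is much shorter and avoids any optimisation; yours costs more machinery but gives a clearer picture of the equality cases and a reusable template for entropy-versus-piecewise-linear inequalities.
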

\begin{proof}
	(modified from \cite{282344})
	Assume that $n_1,\dots,n_t<m\leqslant n_{t+1},\dots,n_k$. Denote $p_i=n_i/n$, $a=m/n$. Let $H(p)=-p\log p$ be the entropy function. What we want to prove is 
	\[
		\sum_{i=1}^k H(p_i)\geq \frac{p_1+\dots+p_t+a(k-t-1)}{1-a}\log a^{-1}.
	\]
	Note that $H(p)$ is concave function, thus we have $H(p_i)\geqslant H(a)\cdot (p_i/a)$ for $i\leq t$ and $H(p_i)\geq H(a)\cdot (1-p_i)/(1-a)$ for $i>t$. Substituting these estimates for $H(p_i)$ to the left-hand site and simplifying we get exactly right-hand side.
\end{proof}
	
\end{document}